\documentclass[11pt]{article}
\usepackage[a4paper,margin=26mm]{geometry}
\usepackage[T1]{fontenc}
\usepackage{lmodern,amsmath,amssymb,amsthm,mathtools,microtype}
\usepackage[colorlinks=true,linkcolor=blue,citecolor=blue,urlcolor=blue,hypertexnames=false]{hyperref}
\usepackage{enumitem}
\usepackage{tikz,flafter}
\usetikzlibrary{arrows.meta}
\setlist{itemsep=3pt,topsep=5pt}
\newtheorem{theorem}{Theorem}
\newtheorem{lemma}[theorem]{Lemma}

\newtheorem{proposition}[theorem]{Proposition}

\theoremstyle{remark}
\newcommand{\F}{\mathbb F}
\newcommand{\dom}{\operatorname{dom}}
\newcommand{\R}{\mathrm R}
\newcommand{\poly}{\mathrm{poly}}
\newcommand{\Q}{\mathrm Q}
\newcommand{\Shape}{\mathsf{Shape}}
\newcommand{\Verify}{\mathsf{V}}
\newcommand{\bits}{\{0,1\}}
\newcommand{\polylog}{\operatorname{polylog}}

\usepackage{mathpazo}
\title{Improved Separations between Quantum and Classical Communication Complexity of Total Functions}
\author{Fran\c{c}ois Le Gall
\\Graduate School of Mathematics\\Nagoya University}
\date{}
\begin{document}
\maketitle
\begin{abstract}
We refine Gavinsky's framework (arXiv:2608.18784) for exponential separations between quantum
and randomized communication complexity of total functions and obtain larger separations: polylogarithmic quantum communication
versus $\widetilde\Omega(\sqrt n)$ randomized communication with two quantum
messages, and versus $\Omega(n^{1-\varepsilon})$ for every fixed $0<\varepsilon<1$ with more quantum messages.
\end{abstract}

\section{Introduction}
\paragraph{Background.}
Understanding the power of quantum communication is the central goal of quantum communication complexity.
Of particular interest is the case of total Boolean functions
$f\colon\bits^n\times\bits^n\to\bits$.
Write $\R(f)$ and $\Q(f)$ for their bounded-error complexities, and
$\Q^r(f)$ for quantum complexity with at most $r$ messages.
Seminal works~\cite{aa,bcw} gave a quadratic separation.
Using the cheat sheet technique from query complexity~\cite{abk},
Anshu et al.~\cite{anshu} obtained
$\R(f)=\widetilde\Omega(\Q(f)^{2.5})$.\footnote{The notation $\widetilde O(\cdot)$ and $\widetilde\Omega(\cdot)$ suppresses factors polynomial in $\log n$.}
Tal's work~\cite{tal} implies an exponent of $8/3-o(1)$;
Sherstov, Storozhenko, and Wu~\cite{ssw} subsequently obtained
$\R(f)\geq\Q(f)^{3-o(1)}$.

Recently, Gavinsky~\cite{gavinsky} proved the first exponential separation
for a total function, using only two quantum messages:
\begin{theorem}[{\cite{gavinsky}}]\label{th:ga}
There is a total function $f\colon\bits^n\times\bits^n\to\bits$ with
$\Q^2(f)=\polylog(n)$ and $\R(f)=\widetilde\Omega(n^{1/6})$.
\end{theorem}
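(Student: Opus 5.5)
This theorem is quoted from \cite{gavinsky}; to prove it I would reconstruct Gavinsky's framework. The plan has three parts. First, start from a partial function with an exponential gap. Second, make it total with a cheat-sheet-style verification layer. Third, show that this layer costs little quantumly but cannot help a classical protocol.

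\textbf{Base problem.} Take a partial function $g$ on inputs of size $m$ such that:
\begin{itemize}
\item $\Q^1(g)=\polylog(m)$, while $\R(g)=\Omega(m^{c})$ for some constant $c>0$;
\item $g$ is robust under composition, in the sense that its hardness survives a direct-sum or XOR-lemma type amplification.
\end{itemize}
A natural candidate is a Raz/Klartag--Regev vector-in-subspace style problem. Its one-way quantum protocol sends a $\log m$-qubit state and keeps that cost under small perturbations.

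\textbf{Totalization.} Define $f$ on $\bits^n\times\bits^n$ as follows.
\begin{itemize}
\item Each player's input encodes $k$ instances of $g$ together with a ``certificate'' region, indexed by the answers $(g(x_1,y_1),\dots,g(x_k,y_k))$.
\item $f=1$ exactly when every instance is valid (promise satisfied) and the certificate at the indexed location passes a consistency check.
\end{itemize}
Because validity and the check are defined for all inputs, $f$ is total.

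\textbf{Quantum upper bound.} The two messages are used as follows.
\begin{enumerate}
\item Alice sends the $k$ quantum messages for the instances. From these Bob learns the answer string $z$ with high probability.
\item Bob replies with a short quantum message. It lets Alice and Bob jointly test, via a swap-test or fingerprinting-type check, both the promise validity of the instances and the certificate at position $z$.
\end{enumerate}
Promise violations must be detectable with $\polylog$ communication. I would therefore build the base problem so that invalid inputs either make the quantum measurement statistics detectably wrong or are caught by the fingerprint check.

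\textbf{Classical lower bound.} This is the main obstacle. The usual route is:
\begin{itemize}
\item Show that a randomized protocol of cost $o(n^{1/6})$ must, on a hard distribution, have almost no information about $z$. This uses a direct-sum/information-complexity argument on the $k$ copies of $g$, each needing $\Omega(m^{c})$ communication.
\item Conclude that the certificate location is then essentially unknown, so the consistency check reduces to a hard search-like problem.
\end{itemize}
The delicate point is that the hard distribution for $f$ must mix $1$-inputs and $0$-inputs while every instance stays nearly valid. I would try a round-elimination or distributional argument with a carefully chosen product-like distribution on the certificate region.

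\textbf{Parameters.} Balance $n\approx k\cdot m+2^{k}\cdot(\text{certificate size})$ with the losses in the information argument; this is what produces the exponent $1/6$. Whether the balancing gives exactly $1/6$ depends on the losses in the lower-bound step, which is also where the refinements announced in the abstract would gain.
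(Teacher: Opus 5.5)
Your skeleton, $k$ copies of a partial function $g$ whose answers address a cell as in the Anshu et al.\ lookup function, matches Gavinsky's construction as the paper describes it. However, the step you defer is the missing idea, and the workaround you propose would fail. You put ``every instance satisfies the promise'' into the definition of $f$ and hope to choose $g$ so that violations show up in the measurement statistics or in a swap/fingerprint test. Nothing in your plan achieves this. Promises are exact threshold conditions: for vector-in-subspace, say, a vector at a tiny angle from $H$ yields a message state almost identical to a promised one, so a polylogarithmic bounded-error protocol cannot separate the two. A swap or fingerprint test only checks equality of strings, not validity of an XOR-shared proof about distributed inputs. Gavinsky's device is to let the addressed cell hold a \emph{certificate of promise membership}. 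With $D_c$ the AND of $k$ copies of the domain circuit, set $f=\Verify(\mathbf x,\mathbf y;\text{cell}_L)$ when $D_c=1$ and $f=0$ otherwise, where $\Verify$ comes from a fully linear PCP as in Proposition~\ref{prop:certificate}. Exactly one certificate is valid on promised tuples and none is valid off the promise. Validity of an additively shared certificate against the shared inputs is tested with one short classical message. This device does three things at once. It makes $f$ total. It makes the second message classical and cheap (the measured address, a challenge, and four affine answer shares). And for each promised tuple it supplies the non-constant XOR cell function $u_L+v_L\mapsto\Verify(\mathbf x,\mathbf y;u_L+v_L)$ that the last step (Claim~25) of the lookup lower bound needs.

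Your lower-bound plan and parameter accounting are also off. The lookup argument (Anshu et al., Theorem~6, adapted in Section~\ref{lb:section}) loses only polylogarithmic factors. Either some party predicts an address bit with advantage $\Theta(1/k)$, from its own input or from an embedded run of the protocol, and minimax plus amplification turn this into a protocol for $g$; or the transcript is nearly independent of the addressed cell and Claim~25 gives a contradiction. No direct-sum or XOR lemma, no round elimination, and no ``nearly valid'' hard distribution is needed: the hard distribution is $\mu^{\otimes k}$ on promised pairs with uniform cells. Your length formula $km+2^k\cdot(\text{certificate size})$ is right, but the exponent $1/6$ comes from plugging in $g=\Shape_m$, not from lower-bound losses:
\begin{itemize}
\item $\R(\Shape_m)=\Omega(\sqrt m)$ and its one-way quantum cost is $O(\log^2 m)$;
\item its $O(m^2)$-size domain circuit gives certificates of length $\widetilde O(m^2)$;
\item $k\approx\log m$ copies give $O(m)$ cells, so $N=\widetilde O(m^3)$ and $\sqrt m=\widetilde\Omega(N^{1/6})$.
\end{itemize}
Vector-in-subspace would give a smaller provable exponent, since Bob's subspace alone has $\Theta(m^2)$ entries against a lower bound of $\Omega(m^{1/3})$. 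Likewise, the refinements in this paper come from shrinking $N$ (an extractor-compressed table and an $\widetilde O(m)$ domain circuit), not from reducing the losses in the lower bound.
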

The construction from \cite{gavinsky} combines the communication lookup method of
Anshu et al.~\cite{anshu} with efficient polynomial-certificate verification
based on fully linear PCPs~\cite{boneh}.

\paragraph{Statement of our results.}
We improve the exponent of Theorem \ref{th:ga} from $1/6$ to $1/2$:
\begin{theorem}\label{th:shape}\label{cor:shape}
There is a total function $f\colon\bits^n\times\bits^n\to\bits$ with
 $\Q^2(f)=\polylog(n)$ and $\R(f)=\widetilde\Omega(\sqrt n)$.
\end{theorem}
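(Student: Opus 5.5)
We build $f$ as a lookup (cheat-sheet) composition in the style of Anshu et al., following Gavinsky's framework: a hard inner partial function with a large quantum/classical gap, lifted to a total function by appending certificates checkable by a low-communication verifier. The improvement over Theorem~\ref{th:ga} comes from two changes. First, we use a stronger inner gap. Second, we make the verification overhead polylogarithmic instead of polynomial, so that only the inner gap determines the exponent.

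\textbf{Step 1: the inner partial function.} We take a partial function $g$ on inputs of size $m$ with $\Q^2(g)=\polylog(m)$ that is hard for bounded-error randomized protocols. A natural candidate is the Raz / Klartag--Regev vector-in-subspace problem, which has $\R(g)=\widetilde\Omega(m^{1/3})$ and one-way quantum complexity $O(\log m)$. A better candidate, if its quantum side uses two messages, is a Bar-Yossef--Jayram--Kerenidis / Gavinsky-type problem with randomized complexity $\widetilde\Omega(\sqrt m)$. We commit to the second kind: a partial problem with $\widetilde\Omega(\sqrt m)$ randomized hardness and a two-message polylog quantum protocol. Using a hardness bound that remains robust under the lookup construction is essential.

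\textbf{Step 2: the total function.} We form $f$ by taking $c=\polylog$ copies of $g$. The output bits of these copies index a block of cells held by both players. Each cell contains a claimed certificate, or ``shape'', that the promise holds and that the answers are as claimed. $f$ outputs the cell's bit if and only if the certificate verifies, which makes $f$ total. The certificate encodes polynomial (low-degree-extension) descriptions of the inputs, so that checking consistency with the actual inputs reduces to evaluating polynomials at random points. Fully linear PCPs (Boneh et al.) make this check cost $\polylog$ communication. The cell size must be $\widetilde O(m)$ rather than a polynomial blow-up. Keeping it that small is what preserves $n=\widetilde O(m)$ and hence the $\sqrt n$ exponent.

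\textbf{Step 3: the upper bound.} Alice and Bob run the two-message quantum protocols for the copies in parallel, with amplified error. This determines the cell. The verification of the certificate is then folded into the same two messages. This works because the FLPCP verifier's queries are linear and nonadaptive, so they can be sent alongside the quantum messages. Total communication is $\polylog(n)$.

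\textbf{Step 4: the lower bound, which is the main obstacle.} We use the cheat-sheet argument: a randomized protocol for $f$ must essentially learn the outputs of the $g$-copies, or else it cannot locate the correct cell. Formalizing this requires a hard distribution for $g$ under which certificates in the wrong cells carry no information. It also requires an information-complexity direct-sum argument showing that computing the index costs $\widetilde\Omega(\sqrt m)$. The delicate point is that the certificates must not leak the inputs. We would handle this by choosing the certificates in the non-designated cells to be uniformly random or inconsistent. We would then argue, via the zero-knowledge-like property of the FLPCP encoding or by a simulation argument, that the protocol's transcript distribution changes negligibly. This yields $\R(f)=\widetilde\Omega(\sqrt m)=\widetilde\Omega(\sqrt n)$.
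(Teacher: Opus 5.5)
\textbf{Main gap: the input length in Step~2.} The input length is not the size of one cell. A table indexed by the $c$ output bits has $2^c$ cells, so each party's input is at least $2^c$ times the certificate length. With $c=\polylog(m)$, as you propose, this is superpolynomial in $m$, and $\widetilde\Omega(\sqrt m)$ is then nowhere near $\widetilde\Omega(\sqrt n)$.

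You also cannot simply shrink $c$. The cheat-sheet lower bound of Anshu et al.\ that Step~4 relies on bounds what the transcript reveals about the correct cell (their Claim~22) by spreading at most $\operatorname{CC}(\Pi)$ bits over $2^c$ \emph{independent} cells. That needs $2^c$ to exceed $\operatorname{CC}(\Pi)$ by a $\poly(c)$ factor. This table-size factor is exactly Gavinsky's bottleneck: $c\approx\log n$ gives input length $\widetilde O(ns)$, hence only exponent $1/4$ even with $\widetilde O(m)$-size cells.

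The idea you are missing is to compress the whole table into one cell:
\begin{itemize}
\item Alice gets $U$ and Bob gets $V$, both in $\F_2^{4b}$ with $b=\widetilde O(n+s)$.
\item The candidate certificate at address $\ell$ is $E_\ell(U+V)$, where $E$ is a strong seeded extractor (Raz--Reingold--Vadhan, seed length $c=O(\log^2 b)$, error $\Theta(c^{-4})$) whose fixed-seed maps are linear and surjective.
\item Then $N=cn+4b$, while $2^c$ is superpolynomial.
\end{itemize}
Because these $2^c$ candidates are correlated, Claim~22 must be re-proved. Outside transcripts of total probability $2^{\operatorname{CC}(\Pi)-2b}$, $U$ keeps min-entropy at least $2b$ given the transcript. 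The strong-extractor property then makes $E_J(U)$ nearly independent of the transcript for a uniform seed $J$, and (A2) transfers this from $J$ to the true address $L$.

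Linearity keeps the certificate additively shared as $E_LU+E_LV$ for the FLPCP test. Surjectivity lets the Claim~25 simulation sample $U,V$ from $E_\ell^{-1}(u_\ell)$ and $E_\ell^{-1}(v_\ell)$. Your ``zero-knowledge'' plan in Step~4 is not the mechanism that is needed: in the hard distribution all cell content is uniform, and the rest of the Anshu et al.\ reduction goes through unchanged.

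\textbf{Two further missing ingredients.}

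\emph{Cell size.} You require cell size $\widetilde O(m)$ but give no way to achieve it. The FLPCP certificate has length $O(S\log S)$, where $S$ is the size of a circuit for the domain predicate. The promise-checking circuit used by Gavinsky has size $O(m^2)$, since it computes the $m$ cyclic-shift distances directly. Even with compression, that gives only $N=\widetilde O(m^2)$ and exponent $1/4$. The paper writes $d_i=|a|+|b|-2\rho_i$ and reads all cyclic correlations $\rho_i$ off one packed integer product. Computed with Harvey--van der Hoeven multiplication, this takes $O(m\log^2 m)$ gates.

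\emph{Message count in Step~3.} The folding argument does not work. Which cell to check depends on $L$, which the final receiver learns only by measuring the last quantum message. Nonadaptivity of the FLPCP queries therefore does not let anyone send verification data earlier; they would need answers for all $2^c$ addresses. Verification costs one extra classical message carrying the address, the challenge, and four affine shares. To land in $\Q^2$, the base problem must have a \emph{one-way} polylog quantum protocol together with two-way randomized hardness $\Omega(\sqrt m)$. That problem is $\Shape_m$. Hidden-Matching-type problems of Bar-Yossef--Jayram--Kerenidis are easy for two-way randomized protocols, so they do not work here.
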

We also achieve every exponent below $1$ with a constant
number of quantum messages:
\begin{theorem}\label{th:forrelation}\label{cor:forrelation-total}
For every constant integer $k\geq2$, there is a total function $f\colon\bits^n\times\bits^n\to\bits$ with $\Q^{\,2\lceil k/2\rceil+1}(f)=\polylog(n)$ and
$\R(f)=\widetilde\Omega(n^{1-1/k})$.
\end{theorem}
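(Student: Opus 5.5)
The plan is to instantiate Gavinsky's lookup-plus-verification template with a partial function that has a larger quantum/classical gap than the one used in \cite{gavinsky}: a $k$-fold Forrelation-type communication problem. The cheat-sheet/lookup construction of Anshu et al.~\cite{anshu} converts a partial separation into a total one. The target is to keep quantum cost polylogarithmic in the input length, with a message count depending only on $k$, while randomized cost stays near linear.

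\textbf{Step 1 (base partial problem).} Take a lifted $k$-fold Forrelation problem on $N$-bit strings. For example, Alice and Bob hold $x,y\in\{\pm1\}^{kN}$, and the promise is that the $k$-fold forrelation of $x\cdot y$ is either at least a constant or at most a small constant. For this problem I would invoke two known facts.
\begin{itemize}
\item It has an $O(\log N)$-qubit protocol with $\lceil k/2\rceil$ rounds of back-and-forth. The parties alternately apply diagonal phase oracles and Hadamards on a $\log N$-qubit register, passing the register at each of the $k$ oracle layers, with the layers grouped so that two consecutive layers can be handled by one party per round trip. This gives $2\lceil k/2\rceil$ messages, up to one extra message to report the outcome.
\item Its randomized communication complexity is $\widetilde\Omega(N^{1-1/k})$. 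This would come from the query lower bound for $k$-fold Forrelation (Bansal--Sinha / Sherstov--Storozhenko--Wu) and a lifting theorem for the XOR gadget on bounded-degree-polynomial-type bounds.
\end{itemize}

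\textbf{Step 2 (totalisation).} Build $f$ as in \cite{gavinsky}. The input consists of $\polylog$ copies of the partial instance together with a pointer/cheat-sheet region. The output bits of the copies address a cell that must contain a certificate that the addressed instances satisfy the promise and have the claimed outputs. In the classical setting the certificate for forrelation values is a polynomial-size object, so I would use fully linear PCPs~\cite{boneh}, split between Alice and Bob, so that the certificate can be checked with $\polylog$ communication. The quantum protocol then runs as follows:
\begin{enumerate}
\item solve each copy with the Step~1 protocol, amplifying by repetition (in parallel, not sequentially, so that the message count is unchanged);
\item look up the addressed cell;
\item verify the certificate with the linear-PCP verifier.
\end{enumerate}
The verifier's messages are classical and can be piggybacked onto existing quantum messages, with the final check adding the one extra message. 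This yields $\Q^{2\lceil k/2\rceil+1}(f)=\polylog(n)$.

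\textbf{Step 3 (classical lower bound and parameters).} The randomized lower bound follows the cheat-sheet argument of \cite{abk,anshu}. A randomized protocol for $f$ must, with noticeable probability, essentially determine the outputs of the partial instances, since otherwise it cannot locate the valid certificate. This reduces to an information/direct-sum lower bound for many copies of the Step~1 problem. The total input length is $n=N\cdot\polylog(N)$ plus the certificate region, and the certificate region must be kept to $\widetilde O(N)$ bits so that $\R(f)=\widetilde\Omega(n^{1-1/k})$.

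\textbf{Main obstacle.} I expect the main obstacle to be this size accounting. Gavinsky's weaker exponent $1/6$ presumably comes from certificates or addressing blowing up the input. The remedy I would try first is to make each certificate cell hold only a short linear-PCP encoding of the few field elements needed to confirm forrelation values, rather than whole inputs. I would then check that the cheat-sheet lower bound still goes through when verification is only probabilistic.
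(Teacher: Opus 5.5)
Your template (parallel amplified base protocols, a lookup address, fully linear PCP verification in one extra message) matches the paper at a high level. The gap is in Step~2, and it is exactly the size accounting you flag but do not resolve. In the lookup construction of \cite{anshu,gavinsky} with $c$ copies there are $2^c$ independent cells. The cheat-sheet lower bound controls the transcript's information about the addressed cell by averaging over these independent cells, so $2^c$ must exceed the communication being ruled out, i.e.\ at least roughly $\R(F)$; Gavinsky takes $c\approx\log n$. (With $c=\polylog(n)$ copies, as you wrote, the table would even be superpolynomial.) Each cell holds a certificate for the promise of all $c$ instances. Proposition~\ref{prop:certificate} gives certificates of length comparable to the size of a domain circuit, which reads all $\widetilde\Theta(m)$ input bits. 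Hence $N\gtrsim\R(F)\cdot m=\widetilde\Omega(m^{2-1/k})$, and the best this yields is $\R(f)=\widetilde\Omega(N^{(k-1)/(2k-1)})$, below $N^{1/2}$; with $c\approx\log n$ it yields only $N^{(1-1/k)/2}$. Your remedy shrinks the wrong factor. To reach $N=\widetilde O(m)$ with a table, you would need certificates of length about $m^{1/k}$ or less for the promise of $c$ lifted forrelation instances, still checkable with polylogarithmic communication in one message. Neither your proposal nor fully linear PCPs supply such certificates.

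The missing idea is the paper's Theorem~\ref{thm:main}, which removes the factor $2^c$ altogether. The table is replaced by a single XOR-shared string $U+V\in\F_2^{4b}$ with $b=\widetilde\Theta(n+s)$. The candidate certificate at address $\ell$ is $E_\ell(U+V)$, where $E$ is the extractor of Lemma~\ref{lem:extractor}, whose fixed-seed maps are linear and surjective and whose seed length is $c=\Theta(\log^2 b)$; thus $N=cn+4b=\widetilde O(n+s)$. Linearity keeps the selected certificate additively shared as $E_JU+E_JV$. After measuring $J$, the final receiver sends $J$, the challenge and its affine shares in one extra message; these shares depend on $J$, so they cannot be piggybacked on earlier messages. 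Because the cells are now correlated, the lower bound of \cite{anshu} no longer applies as a black box. Its Claim~22 is replaced by Lemma~\ref{lb:leakage}: a transcript shorter than $b$ bits leaves $U$ with min-entropy at least $2b$, except with probability $2^{\operatorname{CC}(\Pi)-2b}$. So $E_jU$ is nearly uniform and independent of the transcript for a uniform seed $j$, and \eqref{lb:a2} transfers this to $j=L$; surjectivity is what makes Claims~24--25 go through.

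Two further ingredients are missing from your plan. First, since $N$ depends on $s$, you need a domain circuit of size $\widetilde O_k(m)$. The paper obtains it from the fast Walsh--Hadamard transform and exact integer threshold tests (Proposition~\ref{prop:forrelation-domain}). A merely polynomial-size certificate, or naive $O(m^2)$-size evaluation of $\Phi_{k,m}$, would again lose the exponent. Second, the base function should use the inner-product gadget on $\Theta(\log(km))$ bits, for which the lifting theorem of \cite{cfkmp} turns the Bansal--Sinha query bound \cite{bansal-sinha} into $\R(f_{k,m})=\widetilde\Omega_k(m^{1-1/k})$. There is no general lifting theorem for the one-bit XOR gadget (parity composed with XOR costs two bits), so your Step~1 lower bound is unsupported as stated.
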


Our main technical contribution is the following general conversion from quantum versus randomized communication
separations for partial functions to separations for total functions:

\begin{theorem}\label{thm:main}
Let $F:\bits^n\times\bits^n\to\{0,1,*\}$, where $n\geq1$. Suppose its domain indicator
\[
 D_F(x,y)=\mathbf1[(x,y)\in\dom(F)]
\]
has a Boolean circuit of size at most $s$, where $s\leq\poly(n)$, and $F$ has an $r$-message quantum protocol of cost $q\ge 1$, where $r\geq1$. There is a total function $F_{\mathrm{tot}}:\bits^N\times\bits^N\to\bits$ with
\[
 N=\widetilde O(n+s),\qquad
 \R(F_{\mathrm{tot}})=\widetilde\Omega(\R(F)),\qquad
 \Q^{r+1}(F_{\mathrm{tot}})=\widetilde O(q).
\]
\end{theorem}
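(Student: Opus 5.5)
We follow Gavinsky's strategy for this conversion, but with one change: the certificate that the input lies in the domain is checked directly through a circuit for $D_F$, rather than by verifying a polynomial identity. The total function lets the players \emph{prove} that $(x,y)\in\dom(F)$ with a short certificate. Concretely, $F_{\mathrm{tot}}$ takes Alice's input $(x,\alpha)$ and Bob's input $(y,\beta)$. Here $\alpha,\beta$ are $\widetilde O(n+s)$-bit strings meant to be additive shares of an encoded proof $\pi=\alpha\oplus\beta$. The intended proof $\pi$ consists of the wire values of the size-$s$ circuit for $D_F$ evaluated on $(x,y)$, written as a codeword of a fully linear PCP~\cite{boneh}, and $\pi$ must certify output $1$. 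We set $F_{\mathrm{tot}}(x\alpha,y\beta)=F(x,y)$ if $\alpha\oplus\beta$ is a valid fully linear PCP proof for the statement ``$D_F(x,y)=1$''. Otherwise we set $F_{\mathrm{tot}}=0$. This function is total, and $N=\widetilde O(n+s)$ because fully linear PCPs for circuit satisfiability of size $s$ have proof length $\widetilde O(s)$ over a field of polylogarithmic bit size.

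\emph{Quantum upper bound.} Since the input and the proof are split between the players, fully linear PCP verification works with shares. The verifier's queries are linear functions of $(x,y,\pi)$, so each player evaluates the queries on its own share and sends $\polylog$ field elements. This gives a classical $\polylog(n)$-cost check with one-sided error, and its soundness error can be driven down to $1/\poly$ by repetition. The check can be merged into the existing communication without adding many rounds. The first message of the $r$-message quantum protocol for $F$ is prefixed by Alice's PCP answers (shared randomness picks the queries). The final extra $(r+1)$-st message carries Bob's answers, or whichever player did not speak last sends its answers. The receiver of the last message then outputs $0$ if the PCP rejects and the protocol's output otherwise. The total cost is $\widetilde O(q)$. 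One subtlety is that the quantum protocol may behave arbitrarily off the domain. This does not matter, because the PCP rejects off-domain inputs with high probability, so the output is correct with bounded error.

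\emph{Randomized lower bound.} This is the main obstacle. We reduce $F$ to $F_{\mathrm{tot}}$ by embedding, in the style of lookup and cheat-sheet arguments~\cite{anshu}. Given a promise input $(x,y)\in\dom(F)$, the players would like to feed $(x\alpha,y\beta)$ with valid shares. The honest proof, however, depends jointly on $x$ and $y$, so neither player can compute its share alone. The plan is to exploit the low-correlation (zero-knowledge/secret-sharing) structure: choose $\alpha$ uniformly from shared randomness and let $\beta=\pi\oplus\alpha$, which is information-theoretically independent of $\pi$ taken alone. A randomized protocol for $F_{\mathrm{tot}}$ must then distinguish this distribution from one where the proof is invalid. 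For this we would use an information-complexity or round-elimination argument. We show that any protocol with cost $c\ll \R(F)$ gives a protocol for $F$ with cost $\widetilde O(c)$ on a hard distribution. The protocol simulates Bob's share through a correlated sampling step, and correctness of the sampling is charged to the information the transcript reveals about $\pi$. We would first try a direct simulation argument on hard distributions for $F$ via Yao's principle, with Bob's unknown $\beta$ replaced by a uniformly random string. We then argue, using the additive-sharing structure, that a low-cost protocol cannot notice the replacement except through the $\polylog$ positions it effectively probes. Those positions can be fixed by $\widetilde O(1)$ extra communication per probed linear functional. The result is $\R(F_{\mathrm{tot}})=\widetilde\Omega(\R(F))$.
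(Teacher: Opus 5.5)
\paragraph{There is a genuine gap, and it lies in the construction.} Your $F_{\mathrm{tot}}$ can be classically easy, so the problem is not only in the lower-bound sketch.

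\begin{enumerate}
\item Your upper-bound argument shows that validity of $\alpha\oplus\beta$ can be tested by a \emph{classical} protocol with $\polylog(n)$ bits on additive shares.
\item A valid proof encodes the wire values of the domain circuit on $(x,y)$. These often reveal $F(x,y)$ through a single linear functional of $\pi$. For $\Shape_m$ with the circuit for~\eqref{eq:shape-domain}, the gate computing $\bigvee_i[5d_i\le2m]$ equals $\Shape_m(x,y)$ on all of $\dom(\Shape_m)$.
\item So the players can run the shared test and output $0$ on rejection. Otherwise they exchange their shares of that wire and output it. This computes your $F_{\mathrm{tot}}$ with $\polylog(m)$ communication, although $\R(\Shape_m)=\Omega(\sqrt m)$.
\end{enumerate}

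Your simulation step fails for the same reason. ``Fixing the probed linear functionals'' of $\pi(x,y)$ by extra communication may require computing $F$ itself, so the reduction is circular. The correlated-sampling and round-elimination steps are also left unspecified.

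\paragraph{The missing idea is the cheat-sheet/lookup structure of~\cite{anshu}, made affordable by compression.}
\begin{enumerate}
\item The paper takes $c=\Theta(\log^2 b)$ independent copies of $F$. It uses $L=(F(x_1,y_1),\ldots,F(x_c,y_c))$ as an \emph{address}.
\item $F_{\mathrm{tot}}$ is the validity bit $\Verify(\mathbf x,\mathbf y;E_L(U+V))$, and $0$ off the tuple domain. Here $E$ is a strong linear seeded extractor with surjective fixed-seed maps, and $U,V\in\F_2^{4b}$ are the players' extra inputs.
\item The output is therefore not $F$. It is the validity of a certificate whose location is unknown until all $c$ values of $F$ are known. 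A cheap protocol cannot tell which candidate $E_\ell(U+V)$ to inspect.
\item The extractor replaces the $2^c$ independent cells of~\cite{anshu} by one source of length $4b=\widetilde O(n+s)$. This keeps $N=\widetilde O(n+s)$.
\end{enumerate}

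The lower bound then reruns Claims~17, 18 and 23--25 of~\cite{anshu}, with one new lemma replacing Claim~22:
\begin{enumerate}
\item Since $\operatorname{CC}(\Pi)<\R(F)\le b$, all but a $2^{\operatorname{CC}(\Pi)-2b}$ fraction of transcripts leave $U$ with min-entropy at least $2b$.
\item Hence $E_jU$ is nearly uniform and independent of the transcript for a uniform seed $j$.
\item Condition (A2) transfers this from a uniform seed to the true address $L$.
\end{enumerate}

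The quantum upper bound also differs slightly. The final receiver first measures a candidate address $J$. Only then does it sample the test's challenge, and it sends $J$, the challenge and its shares in the $(r+1)$-st message.
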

This reduces the input-length overhead of the construction used in~\cite{gavinsky}, yielding larger separations. In particular, if $s=\widetilde O(n)$, then $N=\widetilde O(n)$.

\paragraph{Overview of our techniques.}
Informally, the lookup construction of~\cite{anshu}, as used in~\cite{gavinsky},
takes~$c$ instances of $F$ and uses their $c$ output bits as an address $L$.
Each party also receives a table with one cell for each of the $2^c$ possible
addresses; the selected cell supplies a candidate certificate that all $c$
input pairs satisfy the promise of $F$. Applying Gavinsky's construction
with $c\approx\log n$ gives $O(n)$ cells, each with a
certificate of length $\widetilde O(s)$, and hence input length
$\widetilde O(ns)$ per party.

Our main insight is to use a seeded linear extractor to replace the
lookup table by a single ``compressed'' cell. Our construction uses
$c=\polylog(n)$ copies of $F$ and represents the $2^c$ candidate
certificates through this cell, whose length is $\widetilde O(n+s)$.
The cell is represented as $U+V$, where Alice holds $U$, Bob holds $V$,
and $+$ denotes bitwise XOR.
The candidate certificate at address $L$ is $E_L(U+V)$, where $E_L$ is
the linear map obtained by fixing the extractor seed to $L$.
Equation~\eqref{lb:lookup-definition} defines $F_{\mathrm{tot}}$ to be $1$
exactly when this candidate is a valid certificate, and $0$ otherwise;
if any input pair violates the promise, the output is also $0$.

Our construction thus removes the table-size factor $2^c$ from the input length, but still specifies $2^c$ candidate certificates $E_L(U+V)$, one for each possible seed
$L$. However, these certificates are correlated, so the lookup lower bound from \cite{anshu} 
does not apply directly. The extractor provides the near-uniformity needed
to adapt the proof of~\cite[Theorem~6]{anshu} and retain the randomized
lower bound up to polylogarithmic factors.

Applying Theorem~\ref{thm:main} to the function $\Shape_m$ from \cite{Gavinsky2} with the
$O(m^2)$-size promise-checking circuit used in~\cite{gavinsky} already gives
exponent $1/4$. Proposition~\ref{prop:shape-circuit} reduces this circuit size
to $\widetilde O(m)$, yielding exponent $1/2$ and proving
Theorem~\ref{th:shape}.

Theorem~\ref{th:forrelation} follows by applying Theorem~\ref{thm:main} to the $k$-forrelation function composed with an inner product gadget considered in~\cite{ssw}.

\section{Preliminaries}
\paragraph{Definitions and conventions.}
Write $\mathsf U_a$ for the uniform distribution on $\F_2^a$; tensor products
of such distributions indicate independent variables.  For distributions
$P,Q$ on the same finite space, define total variation and squared Hellinger
distance by
\[
 \Delta(P,Q)=\frac12\sum_z|P(z)-Q(z)|,
 \qquad
 h^2(P,Q)=1-\sum_z\sqrt{P(z)Q(z)}.
\]
We use
\begin{equation}\label{lb:distances}
 h^2(P,Q)\leq\Delta(P,Q)\leq\sqrt2\,h(P,Q).
\end{equation}
Mutual information is measured in bits.  A superscript denotes conditioning;
for example, $Z^{\mathbf x,\ell}$ is the distribution of $Z$ conditioned on
$\mathbf X=\mathbf x,L=\ell$. 

Randomized communication complexity $\R(f)$ allows public randomness and arbitrarily many messages.
The quantum complexities $\Q(f)$ and $\Q^r(f)$ allow no prior entanglement.
In an $r$-message quantum protocol, the last receiver produces the output.
All costs are worst-case, measured in bits or qubits, respectively, with
error at most $1/3$ on each promised input.

Boolean circuits use fan-in-two AND and OR gates, and NOT gates.
Circuit-size bounds are taken to be positive integers.
All logarithms are binary; tildes hide polylogarithmic factors.
The min-entropy of $Z$ is $H_\infty(Z)=-\log\max_z\Pr[Z=z]$.

\paragraph{Short-seed linear extractor with surjective maps.}

A map
\[
 E:\F_2^{d}\times\F_2^c\to\F_2^b,
 \qquad E_j(u)=E(u,j),
\]
is a strong $(k,\varepsilon)$-seeded extractor if every source $Z$ on
$\F_2^{d}$ with min-entropy at least $k$ satisfies
\begin{equation}\label{lb:def-extractor}
 \Delta\bigl((J,E_J(Z)),\mathsf U_c\otimes\mathsf U_b\bigr)
 \leq\varepsilon,
 \qquad J\sim\mathsf U_c\text{ independent of }Z.
\end{equation}

The following specialization of the Raz--Reingold--Vadhan
construction~\cite{rrv} includes a simple modification that makes every fixed-seed map surjective.

\begin{lemma}\label{lem:extractor}
Fix a constant $0<\varepsilon_0\leq1/2$. There is a constant $K_E>0$
such that, for every sufficiently large integer $b$, the choice
$c=\lceil K_E\log^2 b\rceil$ admits an explicit strong
$(2b,\varepsilon_0/c^4)$-seeded extractor
\[
 E:\F_2^{4b}\times\F_2^c\longrightarrow\F_2^b
\]
whose fixed-seed maps $E_j:\F_2^{4b}\to\F_2^b$ are binary-linear and
surjective. 
\end{lemma}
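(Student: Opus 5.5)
The plan is to take an existing linear strong extractor from the Raz--Reingold--Vadhan family and then make every fixed-seed map surjective by a cheap linear correction. The starting point is the RRV extractor built from Trevisan's construction with a binary linear code: for a source on $\F_2^{d}$ with $d=4b$, min-entropy $k=2b$, output length $m$ a constant fraction of $k$, and error $\varepsilon$, its seed length is $O(\log^2(d/\varepsilon))$. With $\varepsilon=\varepsilon_0/c^4$ and $c=\Theta(\log^2 b)$, we get $\log(1/\varepsilon)=O(\log\log b)$. So the required seed length is $O(\log^2 b)$. Choosing $K_E$ large enough, we can realize this with exactly $c$ seed bits by padding with unused seed bits; padding does not affect strong extraction. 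Each output bit of Trevisan's construction is a fixed coordinate of a fixed linear encoding of $u$, selected by the seed through the design, so every $E_j$ is $\F_2$-linear.

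Output length needs care. RRV gives output $m\ge k/2=b$ for min-entropy $k=2b$ at this seed length, possibly after their multi-stage design, which loses only constant factors; if necessary I would take output $b$ and truncate. Truncating a strong extractor keeps it strong with the same error, and keeps each fixed-seed map linear.

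For surjectivity, the idea is to reserve part of the source. Write $u=(u_1,u_2)$ with $u_1\in\F_2^{3b}$ and $u_2\in\F_2^{b}$. Let $E'$ be a linear strong $(b,\varepsilon)$-extractor on $\F_2^{3b}$, and define
\[
 E_j(u)=E'_j(u_1)+u_2\cdot M_j,
\]
or, more simply, $E_j(u)=E'_j(u_1)+u_2$. This last form is clearly surjective and linear. It remains to check that extraction survives. If $Z=(Z_1,Z_2)$ has $H_\infty(Z)\ge 2b$, then by the chain rule for min-entropy, for most fixings of $Z_2$ the conditional $Z_1$ keeps roughly $b$ bits of min-entropy, since $Z_2$ carries at most $b$ bits. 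Concretely, with probability at least $1-2^{-s}$ over $z_2$ we have $H_\infty(Z_1\mid Z_2=z_2)\ge b-s$. Conditioned on $Z_2=z_2$, the pair $(J,E'_J(Z_1)+z_2)$ is a fixed shift of $(J,E'_J(Z_1))$ and so is $\varepsilon$-close to uniform. Averaging over $z_2$ and taking $s=\log(2/\varepsilon)$ gives total error at most $\varepsilon+2^{-s}$, so I would run $E'$ with min-entropy $b-\log(2c^4/\varepsilon_0)$ and error $\varepsilon_0/(2c^4)$. The RRV parameters still apply here, with output $b$ from min-entropy about $b$. If RRV cannot output nearly all of the entropy, I would enlarge $u_1$ accordingly, but the split is fixed by $4b$. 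A safer split is $u_1\in\F_2^{3b}$ with entropy $\ge b-O(\log b)$ and output $b$, and this needs an extractor extracting essentially all entropy.

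That last point is the main obstacle: making output length $b$ match the available entropy. My fallback is to apply the RRV extractor with error $\varepsilon/2$ and output $b/2$ on $u_1$ only. I would then surject onto the remaining coordinates using $u_2$, setting $E_j(u)=(E'_j(u_1),\,\pi(u_2))+(0,\ldots)$. Alternatively I would instead use surjectivity of $E'_j$ on most seeds and fix the rank-deficient ones by adding $u_2$ projected onto a complement of the image of $E'_j$. This changes $E_j$ only on the complement, and the previous conditioning argument applies verbatim for any linear map of $u_2$.
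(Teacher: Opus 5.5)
Your first step matches the paper: an RRV extractor $\F_2^{4b}\times\F_2^{O(\log^2 b)}\to\F_2^b$ with linear fixed-seed maps, with the seed padded to $c$ bits. The surjectivity step, however, has a genuine gap.

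For $E_j(u)=E'_j(u_1)+u_2$, conditioning on $Z_2=z_2$ can discard up to $b$ bits of entropy. Your argument therefore needs $E'$ to output $b$ bits from sources of min-entropy $b-s<b$. No extractor can do this. On a flat source of size $2^{b-s}$, every seed's output is supported on at most $2^{b-s}$ points, which is at distance at least $1-2^{-s}\ge 1/2$ from $\mathsf U_b$. So the issue is not finding an extractor that extracts ``essentially all'' the entropy: the required output exceeds the entropy left after conditioning.

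The fallback $E_j(u)=(E'_j(u_1),\pi(u_2))$ fails as written. The last $b/2$ output coordinates $\pi(u_2)$ depend on neither the seed nor $u_1$. Take the source with $Z_2=0$ and $Z_1$ uniform on $2^{2b}$ points of $\F_2^{3b}$; it has min-entropy $2b$, yet those coordinates are constant on it. Besides, the block $E'_j(u_1)$ would still need its own surjectivity.

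Your last alternative points the right way, but it has two problems. It asserts without proof that most $E'_j$ are surjective. It also justifies the error bound by ``the previous conditioning argument'', which is exactly the step that fails. And if $E'$ still reads only $u_1$, the unrepaired seeds must extract $b$ bits from $Z_1$, whose min-entropy is only guaranteed to be $b$.

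The missing idea, used in the paper, makes splitting the source unnecessary. Take the RRV extractor on all of $\F_2^{4b}$ with error $\eta=\varepsilon/3$, and apply its guarantee to the uniform source on $\F_2^{4b}$. A fixed-seed map of rank $t<b$ sends this source to the uniform distribution on a $t$-dimensional subspace. That distribution is at total variation distance $1-2^{t-b}\ge 1/2$ from $\mathsf U_b$, so at most a $2\eta$ fraction of seeds are rank-deficient. Replace each such map by any fixed surjective linear map, for instance projection onto the first $b$ coordinates. For every source $Z$, the joint distribution of $(J,E_J(Z))$ changes only on the event that $J$ is a replaced seed. The error therefore grows by at most $2\eta$, for a total of $3\eta=\varepsilon$. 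No conditioning on part of the source is needed, and every repaired map is linear and surjective by construction.
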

\begin{proof}[Proof of Lemma \ref{lem:extractor}]
Set $\varepsilon=\varepsilon_0/c^4$ and $\eta=\varepsilon/3$.
For sufficiently large $b$, we have $c\leq b$, so
$\eta\geq\varepsilon_0/(3b^4)$ and hence
$\log(b/\eta)=O(\log b)$. The Raz--Reingold--Vadhan coordinate
construction gives an explicit strong $(2b,\eta)$-seeded extractor
$E^{(0)}:\F_2^{4b}\times\F_2^{r_E}\to\F_2^b$ with
$r_E=O(\log^2(b/\eta))=O(\log^2 b)$.
This bound holds uniformly for $1\leq c\leq b$, with an implicit
constant depending only on $\varepsilon_0$, so $K_E$ can be chosen to
dominate it. For completeness, its strong-extractor analysis uses a weak design with
constant overlap parameter $\rho=3/2$ and requires source entropy at least
\[
 \rho b+3\log(b/\eta)+r_E+3
       =\tfrac32 b+O(\log^2 b)\leq2b
\]
for sufficiently large $b$~\cite[Lemma~9 and Proposition~24 of the full version]{rrv}.
The underlying code is Reed--Solomon concatenated with Hadamard, which is
binary-linear; each fixed seed selects codeword coordinates, so every
$E^{(0)}_j$ is binary-linear. 

To make every map surjective, apply \eqref{lb:def-extractor} with error $\eta$ to
the uniform source on $\F_2^{4b}$, whose entropy is $4b\geq2b$.
A map of rank $t<b$ outputs the uniform distribution on a
$t$-dimensional subspace, at total variation distance
$1-2^{t-b}\geq1/2$ from uniform on $\F_2^b$. Thus the fraction of
rank-deficient seeds is at most $2\eta$. Replace each such map by
projection onto the first $b$ coordinates. The resulting maps are all
linear and surjective. For any source, this replacement changes the
joint seed-output distribution by at most $2\eta$, so the repaired
extractor has error at most $3\eta=\varepsilon$.
For a given seed, evaluating the original map on the $4b$ basis vectors
and testing the resulting matrix rank implements the repair explicitly;
an array over all seeds is unnecessary.

Fix such a $K_E$ and take $b$ large enough that $c\leq b$.
Since $r_E\leq K_E\log^2 b\leq c$, pad the seed to length $c$ by
ignoring the extra bits. For a uniform seed, the
unused bits are independent of the used seed and the extractor output,
so \eqref{lb:def-extractor} remains valid with error $\varepsilon$
for the padded seed. The fixed-seed maps are unchanged and hence
remain linear and surjective.
\end{proof}

\paragraph{Circuit-to-certificate construction}
We also use the following form of the certificate construction
in~\cite{gavinsky}, based on fully linear PCPs~\cite{boneh}.
\begin{proposition}[\cite{gavinsky}]\label{prop:certificate}
Let $D(z)$ be a Boolean predicate computed by a circuit of size $S\geq2$. There is a predicate $\Verify_0(z;p)$, with certificate length $b_0=O(S\log S)$, such that:
\begin{enumerate}
\item if $D(z)=1$, exactly one certificate is valid;
\item if $D(z)=0$, no certificate is valid;
\item there is a randomized test with perfect completeness that accepts any fixed invalid input--certificate pair with probability at most $1/64$; on an additive sharing of input and certificate, either party can send one message of $O(\log S)$ bits, containing a random challenge independent of the input and certificate, and four affine-answer shares, and the receiver decides acceptance.
\end{enumerate}
\end{proposition}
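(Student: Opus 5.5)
The plan is to reconstruct the fully linear PCP construction of~\cite{boneh} for circuits, specialized to binary predicates, and to make the certificate uniquely determined by $z$.

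\textbf{Encoding.} Fix the circuit $C$ computing $D$, with gates $g_1,\dots,g_S$. Convert each OR and NOT gate to multiplication and affine maps over $\F_2$. Then every gate is either affine in earlier wires or a product $w_i=w_aw_b$. The certificate will not be an arbitrary witness. Instead, it consists of the wire values of $C$ on $z$, which are uniquely determined by $z$, together with the fully linear proof data that these values satisfy all multiplication constraints. The first component is forced by $z$. The second component must also be canonical, so we take the unique polynomial $P$ interpolating the multiplication-gate products, in the style of the Boneh et al.\ ``polynomial identity'' proof. We lift to an extension field $\F_{2^t}$ with $2^t=\Theta(S)\cdot 64$. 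The two wire-input polynomials $f,g$ have degree $<m$, where $m$ is the number of multiplication gates, and the proof polynomial $P=f\cdot g$ has degree $<2m$. Its coefficients are encoded in binary, giving $O(S)$ field elements, that is $O(S\log S)$ bits. The certificate is valid iff the wire values equal the true evaluation, $P=fg$, and the output wire is $1$. Uniqueness gives item 1, and item 2 follows because the output wire forces $D(z)=1$.

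\textbf{Test.} The verifier samples a random $r\in\F_{2^t}$, which takes $O(\log S)$ bits and is independent of the input. It then checks two things. First, it checks $P(r)=f(r)g(r)$, where $f(r)$ and $g(r)$ are $\F_2$-linear functions of the wire vector by Lagrange interpolation at $r$, and $P(r)$ is linear in the coefficients of $P$. Second, it checks that $P$ restricted to the interpolation points equals the claimed output wires $w_i$. This is a batched check: a random linear combination $\sum_i \rho_i\,(P(\alpha_i)-w_i)$, with coefficients $\rho_i$ derived from a short seed as powers $\lambda^i$, is linear and its soundness loss is at most $m/2^t$. The affine constraints of the affine gates and the input consistency $w_{\text{in}}=z$ are batched the same way into one more linear form. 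The output wire equals $1$ is a single affine query. All of these collapse into four affine answers: $f(r)$, $g(r)$, $P(r)$, and one batched combination that must equal a fixed constant. Each answer is an affine function of $(z,\text{certificate})$. Each party computes its share of each answer and sends it, and the receiver adds the shares and checks one quadratic relation and one linear equality. Perfect completeness is immediate. Soundness follows from Schwartz--Zippel: with degree $<2m$ and the batching degree $m$, the error is $O(S)/2^t\le 1/64$ once the field is chosen large enough.

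\textbf{Main obstacle.} The hardest step is ensuring that any invalid pair is caught. In particular, a certificate with correct wires but wrong $P$ must fail, and the wire-level soundness must not rely on the prover being honest about $f$ and $g$. Because $f$ and $g$ are determined linearly by the wires, a wrong $P$ disagrees with $fg$ as polynomials and is caught with probability at least $1-2m/2^t$. Wrong wires violate some affine or multiplicative constraint, and the batched check catches them with the same bound. I would carefully verify that the extension-field arithmetic is $\F_2$-linear, so that the shares stay additive.
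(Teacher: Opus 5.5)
Your reconstruction is correct and takes the same fully linear PCP route the paper relies on. The paper gives no proof of its own: it imports the statement from Gavinsky's construction based on Boneh et al. That circuit FLPCP has exactly your four $\F_2$-affine queries $f(r)$, $g(r)$, $P(r)$ and one affine consistency/output check, with the canonical choice $P=fg$ giving uniqueness and Schwartz--Zippel over $\F_{2^t}$ with $2^t=\Theta(S)$ giving soundness $1/64$. As an optional simplification, you can omit the explicit wire vector and the batching seed $\lambda$. Read each multiplication output as $P(\alpha_i)$ and express every other wire affinely in $z$ and these values; the challenge is then just $r$.
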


\section{From Partial to Total Functions: Proof of Theorem \ref{thm:main}}
In this section we prove Theorem \ref{thm:main}. Subsection~\ref{sec:lookup-setup} defines $F_{\mathrm{tot}}$.
Subsection~\ref{sec:lb} proves Theorem~\ref{thm:main} using the classical
lower bound of Theorem~\ref{lb:main} and the quantum upper bound of
Theorem~\ref{thm:quantum-upper}.
The proof of Theorem~\ref{lb:main} appears in Section~\ref{lb:section}.

\subsection{Definition of the total function}\label{sec:lookup-setup}

If $\dom(F)=\varnothing$, Theorem~\ref{thm:main} holds trivially by taking
$F_{\mathrm{tot}}\equiv0$ with $N=n$. Henceforth, assume that
$\dom(F)\ne\varnothing$.

Let $F:\bits^n\times\bits^n\to\{0,1,*\}$ be a partial Boolean
function with nonempty domain, and let $s$ bound the size of a circuit
computing its domain indicator. 
Fix $\varepsilon_0=10^{-88}$ and the
corresponding constant $K_E$ from Lemma~\ref{lem:extractor}. First set
\[
 t=n+s+2,\qquad b=\left\lceil A t\log^3 t\right\rceil,
\]
where $A$ is a sufficiently large absolute constant. Then set
\[
 c=\left\lceil K_E\log^2 b\right\rceil,
 \qquad \varepsilon=\frac{\varepsilon_0}{c^4}.
\]
Choose $A$ large enough that $b\geq n+1$ and all subsequent size requirements are satisfied.
Let
$E:\F_2^{4b}\times\F_2^c\to\F_2^b$ be the strong
$(2b,\varepsilon)$-seeded extractor from Lemma~\ref{lem:extractor}. 

For tuples $\mathbf x=(x_1,\ldots,x_c)$ and
$\mathbf y=(y_1,\ldots,y_c)$, let $D_c(\mathbf x,\mathbf y)$ indicate
that all $c$ pairs lie in $\dom(F)$, and, on that domain, define
\[
 L(\mathbf x,\mathbf y)
   =\bigl(F(x_1,y_1),\ldots,F(x_c,y_c)\bigr)\in\F_2^c.
\]

The tuple-domain predicate $D_c$ is computed by taking the AND of $c$
copies of the domain circuit for $F$, giving a circuit of size $O(cs)$.
Applying Proposition~\ref{prop:certificate} to this circuit gives a
certificate of length
\[
 b_0=O\bigl(cs\log(cs)\bigr).
\]
For sufficiently large $A$, this length satisfies $b_0\leq b$: increasing
$A$ enlarges $b$ linearly, while the certificate-length bound grows only
polylogarithmically in $A$.
Define $\Verify(\mathbf x,\mathbf y;p)$ on $b$-bit certificates by applying
the original predicate to the first $b_0$ bits and ignoring the remaining
bits. Since $b_0\ge 1$ and the original certificate is unique, every
promised tuple has both valid and invalid $b$-bit certificates; off the
tuple domain, none is valid.


Give Alice $(\mathbf x,U)$ and Bob
$(\mathbf y,V)$, with $U,V\in\F_2^{4b}$, and define
\begin{equation}\label{lb:lookup-definition}
 F_{\mathrm{tot}}(\mathbf x,U;\mathbf y,V)=
 \begin{cases}
 \Verify\bigl(\mathbf x,\mathbf y;E_{L(\mathbf x,\mathbf y)}(U+V)\bigr),
       &D_c(\mathbf x,\mathbf y)=1,\\
 0,    &D_c(\mathbf x,\mathbf y)=0.
 \end{cases}
\end{equation}
The sum $U+V$ is over $\F_2$. Note that there are exactly \[
N=cn+4b=\widetilde O(n+s)
\]
input bits per party.

\subsection{Lower bound statement and proof of Theorem~\ref{thm:main}}\label{sec:lb}
We use the assumptions of Theorem~\ref{thm:main} and the construction of
Subsection~\ref{sec:lookup-setup}. In particular, $b\geq n+1\geq\R(F)$;
the assumption $s\leq\poly(n)$ gives $b\leq\poly(n)$ and
$c=O(\log^2 n)$. Our main technical result is the following lower bound,
proved in Section~\ref{lb:section}.
\begin{theorem}\label{lb:main}
$\R(F_{\mathrm{tot}})=\widetilde\Omega(\R(F))$.
\end{theorem}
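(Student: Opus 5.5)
We follow the proof of \cite[Theorem~6]{anshu}, using the extractor of Lemma~\ref{lem:extractor} in place of the independence of table cells. Let $\Pi$ be a randomized protocol for $F_{\mathrm{tot}}$ with cost $T$ and error $1/3$; by standard amplification we may assume the error is at most a small constant $\delta$, at an $O(1)$ cost factor. The goal is a protocol for $F$ of cost $\widetilde O(T)$, or equivalently to show $T=\widetilde\Omega(\R(F))$. By Yao's principle it suffices to fix a hard distribution $\mu$ on $\dom(F)$ against which any $\widetilde o(\R(F))$-bit protocol errs with probability noticeably above $\delta$, after a standard $O(\log c)$-fold amplification so that $\mu$ stays hard for error $1/(10c)$. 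Define an input distribution for $F_{\mathrm{tot}}$ as follows. Choose $\mathbf x,\mathbf y$ with $c$ independent $\mu$-pairs, so $D_c=1$, and choose $U$ uniform in $\F_2^{4b}$. Choose $V$ so that $U+V$ is uniform conditioned on $E_{L}(U+V)$ being the unique valid certificate, where $L=L(\mathbf x,\mathbf y)$; call this the $1$-distribution. The $0$-distribution instead takes $U+V$ fully uniform, so that the certificate is invalid with high probability. Surjectivity of $E_L$ guarantees that the conditioned distribution exists and has min-entropy $4b-b=3b\geq 2b$.

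The key steps follow the cheat-sheet argument. First, $\Pi$ distinguishes the $1$-distribution from the $0$-distribution. Using the relation \eqref{lb:distances}, the transcripts under the two distributions are therefore far apart in Hellinger distance. Second, a hybrid/information argument shows that this forces the transcript to carry information about $L$. If, conditioned on the transcript, $L$ still had high entropy, then we could argue as follows: the quantity $Z=U+V$ conditioned on the transcript, viewed from the side of a typical $\ell$, has min-entropy at least $2b$, since a transcript of $T\le b$ bits removes at most about $T+O(\log)$ bits of entropy from $Z$. Applying \eqref{lb:def-extractor} with seed $J$ uniform, $E_J(Z)$ is $\varepsilon$-close to uniform jointly with $J$. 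Since $\varepsilon=\varepsilon_0/c^4$ is tiny relative to $2^{-c}$-scale averaging only after conditioning on a near-uniform $L$, this would mean the transcript cannot tell whether the certificate at the true address is planted. Third, we conclude that the transcript reveals $\Omega(1)$ bits of information about the true vector $L$, hence (via a direct sum/information complexity argument over the $c$ coordinates) $\Omega(1/c)$ information about some $F(x_i,y_i)$. We then embed a single $\mu$-instance in coordinate $i$, simulating the other coordinates and $U,V$ with public and private randomness. This yields a protocol for $F$ with advantage $\Omega(1/\polylog)$ and cost $T$, and amplification gives $\R(F)=\widetilde O(T)$.

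The main obstacle is the second step. In \cite{anshu} the cells are independent, so conditioning on the transcript leaves the cell at the true address nearly uniform for all but a few addresses. Here all candidate certificates $E_\ell(U+V)$ are correlated functions of one string, and the seed $L$ is not uniform and independent of $Z$ given the transcript; it is correlated through $\mathbf x,\mathbf y$. I would handle this by first conditioning on a transcript prefix and bounding $I(\text{transcript};L)$ to be small (otherwise we are done by step three). Next I would use that a near-independent $L$ with entropy $c-o(1)$ is close to uniform, losing factors like $2^{c}\varepsilon$ only through a Pinsker/Hellinger bound rather than a union bound; this is where the choice $\varepsilon=\varepsilon_0/c^4$ and the tiny $\varepsilon_0$ enter. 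Finally, I would handle the rectangle structure by doing this for Alice's and Bob's sides separately, using the cut-and-paste property of Hellinger distance for protocols. Making these conditional-independence steps rigorous is where I expect most of the work to lie.
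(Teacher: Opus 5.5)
Your step 2 is the crux of the argument, and as formulated it does not go through. Certificate validity depends jointly on $(\mathbf x,\mathbf y)$, so the near-independence you need must hold with both base inputs fixed. But then $L$ is a fixed string, and an extractor says nothing about a fixed seed: the seed is random only from one party's viewpoint. The paper therefore applies the extractor to one share at a time. Given $\mathbf X=\mathbf x$, the public coins, and a transcript of probability at least $2^{-2b}$, Alice's source $U$ (not $Z=U+V$) has min-entropy at least $2b$. This gives $\mathbb E_{\mathbf x,j}\Delta\bigl((\Pi,E_jU)^{\mathbf x},\Pi^{\mathbf x}\otimes\mathsf U_b\bigr)\leq\varepsilon+2^{\operatorname{CC}(\Pi)-2b}$ for an independent uniform seed $j$. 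It then passes to the true address using \eqref{lb:a2} with Markov's inequality, and then \eqref{lb:a3} via Claim~23. Note that \eqref{lb:a2} is closeness of $(\mathbf X,L)$ to $\mathbf X\otimes W$, not ``$L$ has entropy $c-o(1)$''; it holds because otherwise a party predicts $F$ from its own input. Also, contrary to your remark, $\varepsilon=\varepsilon_0/c^4$ is much larger than $2^{-c}$. The paper only pays $\varepsilon/\sqrt\delta+c\delta/3$.

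More seriously, no total-variation near-independence statement, even a two-sided one, shows that the transcript ``cannot tell whether the certificate is planted.'' Your $1$-distribution is the uniform one conditioned on $E_L(U+V)$ having the correct $b_0$-bit prefix. That event has probability $2^{-b_0}$, and $\varepsilon$-closeness allows the transcript to shift by up to about $\varepsilon2^{b_0}$ under such conditioning. In addition, under the planted distribution $V$ depends on $(\mathbf x,\mathbf y)$ through both $L$ and the certificate. So your step-3 embedding, which otherwise matches the paper's $\Pi_i$ and Claim~17, cannot sample $V$ with public coins.

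The missing mechanism is Claims~24--25, and the paper never plants a certificate. With $U,V$ uniform, the relations $U\leftrightarrow\Pi\leftrightarrow V$ given $(\mathbf x,\mathbf y,\ell)$ combine the two one-sided estimates into \eqref{lb:fiber-tail}. This says that for most \emph{uniform} cell values the transcript barely moves. Worst-case correctness on the tiny set of $1$-inputs is then reached through the rectangle structure. Pick $u',v',v$ independent and uniform, and set $u=v'+s$ with $g(s)=1$. Then $(u,v)$ and $(u',v')$ are uniform pairs, hence typical, so their transcripts are $2\tau$-close. The Pythagorean property \cite[Fact~16]{anshu} then makes the transcripts of $(u,v')$, a $1$-input, and $(u',v')$, almost surely a $0$-input, $\sqrt{8\tau}$-close, which is a contradiction. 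Treating ``each side separately'' cannot replace this combining step.

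Finally, your use of Yao's principle is inverted. A $\mu$ that is hard for error $1/(10c)$ is not contradicted by a protocol with advantage $\Omega(1/\polylog)$ on $\mu$, and distributional advantage on a fixed $\mu$ cannot be amplified by repetition. You need either a $\mu$ hard at error $1/2-\Theta(1/c)$, or, as in the paper, error at most $1/2-\delta/3$ for every $\mu$, followed by minimax and worst-case amplification. That step is where the $O(c^2)=O(\log^4 b)$ loss comes from.
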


We now show the quantum upper bound.

\begin{theorem}\label{thm:quantum-upper}
$\Q^{r+1}(F_{\mathrm{tot}})=\widetilde O(q)$.
\end{theorem}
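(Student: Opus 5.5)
The plan is to run $c$ parallel, error-reduced copies of the $r$-message protocol for $F$ so that its last receiver learns $L$, and then to spend one extra message on the test of Proposition~\ref{prop:certificate} applied to the candidate certificate $E_L(U+V)$. The key observation is that $E_L$ is linear, so the parties hold an additive sharing of that certificate as soon as both know $L$.

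\emph{Step 1 (computing $L$).} Alice and Bob run the $r$-message protocol for $F$ on all $c$ pairs $(x_i,y_i)$ in parallel. Each copy is amplified to error at most $1/(100c)$ by $O(\log c)$ parallel repetitions and a majority vote, which does not increase the number of messages. The cost is $O(qc\log c)=\widetilde O(q)$, since $c=O(\log^2 n)$. On inputs outside $\dom(F)$ the protocol still halts with some output bit; we only require that it terminate with an arbitrary value. Let $P$ be the last receiver. $P$ obtains a string $\hat L\in\F_2^c$, and on $D_c=1$ a union bound gives $\hat L=L(\mathbf x,\mathbf y)$ with probability at least $99/100$.

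\emph{Step 2 (the extra message).} Write $S=O(cs)$ for the size of the circuit for $D_c$. View $(\mathbf x,\mathbf y;p)$ as additively shared: Alice holds $(\mathbf x,0;E_{\hat L}(U))$ and Bob holds $(0,\mathbf y;E_{\hat L}(V))$. By linearity of $E_{\hat L}$ (Lemma~\ref{lem:extractor}), these shares sum to $(\mathbf x,\mathbf y;E_{\hat L}(U+V))$, and $\Verify$ only reads the first $b_0$ bits of the certificate, which is a linear projection. In message $r+1$, party $P$ sends $\hat L$ ($c$ bits) together with a constant number of independent instances of the one-message test of Proposition~\ref{prop:certificate}. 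Each instance consists of a random challenge and $P$'s four affine-answer shares, and costs $O(\log S)=O(\log n)$ bits. Because the challenge is chosen independently of inputs and certificate, $P$ can generate it alone. Proposition~\ref{prop:certificate} allows either party to be the sender, so $P$ may be whichever party receives message $r$. The other party uses $\hat L$ to compute its own share $E_{\hat L}(\cdot)$ of the certificate, completes the answers, and outputs $1$ exactly when all instances accept. The total number of messages is $r+1$ and the total cost is $\widetilde O(q)+O(c+\log n)=\widetilde O(q)$.

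\emph{Step 3 (correctness).} We check three cases against Equation~\eqref{lb:lookup-definition}.
\begin{itemize}
\item If $D_c=1$ and $F_{\mathrm{tot}}=1$: with probability at least $99/100$ we have $\hat L=L$, the certificate $E_L(U+V)$ is valid, and perfect completeness gives acceptance.
\item If $D_c=1$ and $F_{\mathrm{tot}}=0$: conditioned on $\hat L=L$ the pair is invalid, so two instances accept with probability at most $1/64^2$. Adding the $1/100$ failure probability of Step 1 keeps the total error below $1/3$.
\item If $D_c=0$: no certificate is valid for any seed. Hence for every value of $\hat L$, which may depend on the run of the quantum protocol, the resulting fixed pair is invalid. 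The test then accepts with probability at most $1/64^2$, because its challenge is fresh randomness independent of $\hat L$.
\end{itemize}

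The step needing the most care is the off-domain case. There $\hat L$ is an uncontrolled random variable produced by the quantum protocol, so we must argue soundness conditionally on each possible value of $\hat L$ and use the independence of the challenge. A second point to check is that the additive-sharing interface of Proposition~\ref{prop:certificate} accepts the shares $E_{\hat L}(U)$ and $E_{\hat L}(V)$ as shares of the certificate. This holds only because every fixed-seed map $E_j$ is linear; surjectivity of $E_j$ is irrelevant here.
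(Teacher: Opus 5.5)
Your proposal is correct and follows essentially the same route as the paper: parallel amplified copies let the last receiver learn $L$, and one extra message carries $\hat L$, a fresh independent challenge, and that party's affine shares of the linearly shared certificate $E_{\hat L}U+E_{\hat L}V$, with soundness off the domain argued conditionally on each value of $\hat L$. The differences are cosmetic: you amplify to $1/(100c)$ rather than $1/(64c)$, and you use two test instances where the paper's single instance, with soundness $1/64$, already suffices.
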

\begin{proof}
Amplify each base protocol to error at most $1/(64c)$ by $O(\log c)$ independent repetitions. Run all $c$ slots and all repetitions in parallel. This costs $O(cq\log c)$ qubits while preserving the number of messages. The final receiver obtains an address $J$. If $D_c=1$, a union bound gives $\Pr[J\ne L]\leq1/64$.

By linearity, the certificate at address $j$ is additively shared as
$E_jU+E_jV$. Only after measuring and fixing $J$ does the final receiver
sample a random challenge for the test of Proposition~\ref{prop:certificate}, using randomness independent of the preceding protocol. It sends $J$, the challenge, and its four affine verification shares in one final message, costing $O(c+\log s)$ bits, since the $\log c$ term is absorbed by $c$. The other party computes its own certificate share at $J$, reconstructs the four values, and outputs the test's decision.

On a promised tuple, conditioning on $J=L$ leaves an error of at most $1/64$, so total error is at most $2/64$. Off the tuple domain, every certificate is invalid for every possible $J$. Conditioning on $J$ does not change the challenge distribution, so soundness bounds acceptance by $1/64$. This proves correctness on every input of the total function. The overall cost is
\[
 O(cq\log c+c+\log s )=\widetilde O(q),
\]
as claimed.
\end{proof}
Combining these two bounds with $N=\widetilde O(n+s)$ proves Theorem~\ref{thm:main}.

\begin{samepage}
\section{Proof of the Compressed Lookup Lower Bound}\label{lb:section}

\theoremstyle{plain}
\newtheorem{lookupconversion}{Theorem}[section]
\theoremstyle{remark}
Consider the construction of Subsection~\ref{sec:lookup-setup}. Theorem~\ref{lb:main} follows from the theorem below.

\begin{lookupconversion}\label{lb:conversion}
For every $1/3$-error randomized protocol $\Pi$ for $F_{\mathrm{tot}}$, there is a $1/3$-error
randomized protocol $\Pi'$ for $F$ such that
\[
 \operatorname{CC}(\Pi')
 \leq O\!\left(\log^4 b\,\cdot\operatorname{CC}(\Pi)\right).
\]
\end{lookupconversion}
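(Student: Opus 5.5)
We adapt the proof of \cite[Theorem~6]{anshu} (the lookup/cheat-sheet lower bound), replacing the independence of the $2^c$ table cells by the near-uniformity supplied by the extractor. Write $\mathbf X,\mathbf Y$ for the $c$ input tuples and $W=U+V$ for the shared compressed cell. We fix a hard distribution $\mu$ on $\dom(F)$ for which every protocol of cost $o(\R(F))$ errs with probability noticeably above $1/3$; by Yao's principle and amplification we may take it to require error well below $1/3$. We then consider inputs $(\mathbf x,\mathbf y)\sim\mu^{\otimes c}$, so $D_c=1$ always, together with $U$ uniform and $V=U+W$. Everything then depends on how $W$ is chosen.

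\textbf{Step 1 (hard distribution on $F_{\mathrm{tot}}$).} We compare two input distributions. In the \emph{rejecting} distribution $W$ is uniform on $\F_2^{4b}$. In the \emph{accepting} distribution $W$ is uniform on the affine subspace $\{w: E_{L}(w)=p^*(\mathbf x,\mathbf y)\}$, where $p^*$ is the unique valid certificate, padded by zeros. This subspace has dimension $3b$ because $E_L$ is surjective, which is exactly why Lemma~\ref{lem:extractor} repairs surjectivity. On the rejecting distribution $E_L(W)$ is uniform, so it is valid with probability at most $2^{-(b-b_0)}$ and $F_{\mathrm{tot}}=0$ almost surely. A $1/3$-error $\Pi$ must therefore distinguish the two distributions with constant advantage, say $\ge 1/4$ in total variation of transcripts.

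\textbf{Step 2 (the transcript reveals little about $W$, hence little about $L$).} Let $T$ be the transcript under the rejecting distribution. Because $U,V$ are individually uniform and $W$ is uniform, the standard rectangle argument gives $I(T;W\mid \mathbf X,\mathbf Y)\le \operatorname{CC}(\Pi)$. By a Markov argument, with high probability over $(\mathbf x,\mathbf y,T)$ the conditional distribution of $W$ has min-entropy at least $4b-O(\operatorname{CC}(\Pi))\ge 2b$, after truncating to a smoothed min-entropy event. We may assume $\operatorname{CC}(\Pi)\le b$, since otherwise the bound is trivial because $b\ge \R(F)$. The accepting distribution is obtained from the rejecting one by conditioning on $E_L(W)=p^*$. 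By Bayes, the likelihood ratio of a transcript $\tau$ equals
\[
2^{b}\Pr[E_L(W)=p^*\mid \tau,\mathbf x,\mathbf y].
\]
If $L$ were a uniform seed independent of $W$ given $\tau$, the strong-extractor guarantee \eqref{lb:def-extractor} would make $(L,E_L(W))$ $\varepsilon$-close to uniform, and the two transcript distributions would be close. So a distinguishing advantage forces the transcript to carry information about $L$, or to correlate $L$ with $W$ in a way that breaks seed independence.

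\textbf{Step 3 (extracting a protocol for $F$).} This step is the main obstacle. The seed $L$ is not uniform and is correlated with $T$ through $\mathbf X,\mathbf Y$, and $\varepsilon=\varepsilon_0/c^4$ has to pay for this. I would follow the hybrid and Hellinger-distance strategy of \cite{anshu}, using \eqref{lb:distances} to move between total variation and $h^2$.
\begin{itemize}
\item First show that if $I(T;L\mid \text{typical }W\text{-state})$ is small, then conditioned on the transcript $L$ has min-entropy close to $c$. The strong extractor tolerates a seed source that is $\delta$-far from uniform at cost $\delta$, or that has entropy deficiency $\kappa$ at cost $2^{\kappa}\varepsilon$, and the $c^4$ slack absorbs $\kappa=O(\log c)$.
\item Hence distinguishing advantage $\Omega(1)$ forces $I(T;L)\ge \Omega(1)$, since $L$ must be predictable to within $O(\log c)$ bits of entropy loss. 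That is, the transcript carries $\Omega(1/c)$ information per coordinate on average by the chain rule.
\item A direct-sum/embedding step then follows \cite{anshu}: pick a random coordinate $i$, embed the real instance as $(x_i,y_i)$, and sample the other coordinates and $U,V$ with public and private randomness. The joint sampling of $W$ is fine in the rejecting distribution, because $U$ is public-uniform and the parties' shares are independent.
\item This yields a protocol for $F$ on $\mu$ with advantage $\Omega(1/c)$, or an information bound that round-elimination-free compression converts into one. Amplifying by $O(c^2)$ repetitions, together with the polylog losses from the Hellinger conversion, gives the factor $O(\log^4 b)=O(c^2)$.
\end{itemize}
I expect the delicate part to be making the min-entropy and seed-deficiency accounting rigorous, conditioned on the transcript and on $(\mathbf x,\mathbf y)$ simultaneously. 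I would first try a two-step conditioning: fix the transcript prefix and $(\mathbf x,\mathbf y)$, which determine $L$. Then I would argue that the rectangle structure keeps $W$ a high-entropy source independent of $L$ given the rectangle, and apply \eqref{lb:def-extractor} averaged over seeds weighted by the induced distribution of $L$.
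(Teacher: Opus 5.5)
\textbf{There is a genuine gap in Steps 1--2.} Your high-level ingredients do match the paper: a bad-transcript min-entropy bound, the reduction to $\operatorname{CC}(\Pi)<\R(F)\le b$, and $\delta=\Theta(1/c)$ with $O(c^2)$ amplification. The problem is what you do with them.

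Your accepting distribution is the rejecting one conditioned on the event $E_L(W)=p^*$, which has probability $2^{-b}$ for each fixed base tuple. The extractor, even with an ideal seed, only gives a statistical-distance statement: $(T,E_L(W))$ is close to $T\otimes\mathsf U_b$, i.e.\ conditioning on $E_L(W)=w$ barely moves the transcript for \emph{most} $w$. That says nothing about the single value $p^*$, so ``the two transcript distributions would be close'' does not follow, even approximately.

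For example, suppose both parties knew $\ell$ and $p^*$. Alice sends $k$ public-coin random parities of $E_\ell U$, and Bob announces only whether they match the same parities of $E_\ell V+p^*$. Under your rejecting distribution this transcript is within $O(2^{-k})$ of independent of $(\mathbf X,\mathbf Y,W)$, so every hypothesis of your Step 2 holds up to that error. Yet the announced bit is always $1$ under the accepting distribution.

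The cell predicate is essentially Equality, which is cheap, so hardness can come only from the parties not knowing $L$. An argument that works with the sum $W=U+V$, conditioned on both $\mathbf x$ and $\mathbf y$, cannot exploit this. The paper keeps the two shares apart:
\begin{itemize}
\item it shows $\Pi^{\mathbf x,\mathbf y,\ell,u_\ell,v_\ell}\approx\Pi^{\mathbf x,\mathbf y,\ell}$ for most pairs $(u_\ell,v_\ell)$ of \emph{separate} shares, which is \eqref{lb:fiber-tail};
\item it then derives a contradiction from three facts: the XOR form $g(u_\ell+v_\ell)$; error at most $1/3$ on \emph{every} input, not just on two distributions; and the cut-and-paste (Pythagorean) property of transcripts.
\end{itemize}
The last property transfers closeness of the typical pairs $(u,u'+s_0)$ and $(u',u+s_1)$ to the crossed pairs $(u,u+s_1)$ and $(u',u'+s_0)$, whose outputs $g(s_1)=1$ and $g(s_0)=0$ differ. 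Neither the per-share closeness nor this step appears in your plan.

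\textbf{Step 3 has a second, related problem.} Once you condition on $(\mathbf x,\mathbf y)$, the seed $L$ is fixed, and a strong extractor says nothing about a particular seed. Averaging back over $(\mathbf x,\mathbf y)$ does not help, because the conditional source then depends on $(\mathbf x,\mathbf y)$, hence on the seed.

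The paper conditions only on Alice's tuple. Alice's share $U$ alone has min-entropy at least $2b$ given $(\mathbf x,\mathcal R,Z)$ outside bad transcripts. The extractor is applied with an independent uniform seed $j$. The resulting average bound is transferred from $\mathbf X\otimes W$ to $\mathbf XL$ by the case hypothesis \eqref{lb:a2}, whose failure is itself a one-bit protocol (Claim~18). The dependence of the transcript on the address is handled separately by \eqref{lb:a3} and Claim~23. If \eqref{lb:a3} fails, Claim~17 lets one party predict $L_i$ from its own input and the transcript of the embedded protocol $\Pi_i$, at cost $\operatorname{CC}(\Pi)+1$.

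Your substitute, ``$I(T;L)=\Omega(1)$, hence $\Omega(1/c)$ per coordinate, then direct sum or compression'', does not give this. What yields a cheap protocol is information about $L_i$ \emph{given one party's own input}. No compression step is needed, and none is available at this efficiency.

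\textbf{Minor points.} Under your rejecting distribution a certificate is valid with probability $2^{-b_0}$, not $2^{-(b-b_0)}$. Also, the Yao distribution you fix must be hard for error $1/2-\Theta(1/c)$ at cost $\operatorname{CC}(\Pi)+1$, not for error around $1/3$.
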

\end{samepage}

This is a variant of \cite[Theorem~6]{anshu}.
Our proof follows
\cite[Section~3.2]{anshu}, with the same sequence of cases, the same
short headings, and corresponding notation; see Figure~\ref{fig:lookup-proof}.
The main proof is given in Section \ref{lb:main-argument}.
The labels \emph{same as~\cite{anshu}} refer to the same proof mechanism,
specialized to communication cost, and \emph{edited argument} marks
a change needed for compression.
The supporting lemma and the checks
for reusing the other claims appear in Section~\ref{lb:reuse-claims}.

\begin{figure}[t]
\centering
\begingroup
\definecolor{lookupnew}{RGB}{160,36,36}
\begin{tikzpicture}[
  x=1cm,y=1cm,
  every node/.style={font=\small,align=center},
  box/.style={draw=black,fill=white,line width=.45pt,inner sep=3pt},
  novel/.style={box,draw=lookupnew,text=lookupnew},
  flow/.style={-{Stealth[length=1.5mm,width=1.1mm]},line width=.45pt},
  lab/.style={fill=white,inner sep=2pt}
]
\draw[line width=.45pt] (3.0,0) rectangle (15.75,-8.6);
\node[anchor=north west,inner sep=3pt] at (3.0,0) {$\forall\mu$};

\node[box] (a2) at (5.6,-1.2)
  {\eqref{lb:a2}\\$\Delta(\mathbf XL,\mathbf X\otimes{W})\leq\frac{c\delta}{3}$?};
\node[box] (a3) at (13.25,-1.2)
  {\eqref{lb:a3}\\$\forall i\quad I(L_i:\Pi_i\mid X_i)\leq\delta$?};

\node[novel] (new) at (7.3,-3.4) {Lemma~\ref*{lb:leakage}};
\node[box,text=blue] (c23) at (11.35,-3.4) {Claim 23};
\node[box,text=blue] (c24) at (9.4,-4.65) {Claim 24};
\node[box,text=blue] (c18) at (5.6,-5.15) {Claim 18};
\node[box,text=blue] (c17) at (13.25,-5.15) {Claim 17};
\node[box,text=blue] (c25) at (9.4,-5.85) {Claim 25};
\node[box] (contra) at (9.4,-6.85) {contradiction};

\draw[flow] ([xshift=5mm]a2.south) -- node[lab,above right]{Yes} (new.north);
\draw[flow] ([xshift=-5mm]a3.south) -- node[lab,above left]{Yes} (c23.north);
\draw[flow] (a2.south) -- node[lab,pos=.25]{No} (c18.north);
\draw[flow] (a3.south) -- node[lab,pos=.25]{No} (c17.north);
\draw[flow] (new.south) -- (c24.north west);
\draw[flow] (c23.south) -- (c24.north east);
\draw[flow] (c24.south) -- (c25.north);
\draw[flow] (c25.south) -- (contra.north);

\node[box,text width=11.25cm] (protocol) at (9.4,-7.85)
  {$\exists\,\Pi_\mu\quad
   \operatorname{err}_\mu(\Pi_\mu)\leq\frac12-\frac\delta3,\quad
   \operatorname{CC}(\Pi_\mu)\leq \operatorname{CC}(\Pi)+1$};
\draw[flow] (c18.south) -- (c18.south |- protocol.north);
\draw[flow] (c17.south) -- (c17.south |- protocol.north);
\node[box] (amplify) at (9.4,-9.2) {minimax + error reduction};
\draw[flow] (protocol.south) -- (amplify.north);

\node[box,text width=11.25cm] (result) at (9.4,-10.4)
  {$\exists\,\Pi^{\prime}\text{ for }F:\quad \operatorname{err}(\Pi^{\prime})\leq\frac13,\quad \operatorname{CC}(\Pi^{\prime})\leq O\!\left(c^2\operatorname{CC}(\Pi)\right)$};
\draw[flow] (amplify.south) -- (amplify.south |- result.north);
\end{tikzpicture}
\endgroup
\caption{Proof of Theorem~\ref{lb:conversion},
following \cite[Figure~1]{anshu}.
Red marks Lemma~\ref{lb:leakage}, which replaces Claim~22;
all other claims are from~\cite{anshu}.}
\label{fig:lookup-proof}
\end{figure}

\subsection{Proof of Theorem~\ref{lb:conversion}}\label{lb:main-argument}
Set $\delta=1/(10^{22}c)$.
First, $\operatorname{CC}(\Pi)\geq1$: on a fixed promised base tuple,
surjectivity of $E_\ell$ and nonconstancy of the certificate predicate
allow the other party to change $F_{\mathrm{tot}}$ while keeping the
output party's input fixed.
Without loss of generality, the last communicated bit is the protocol's
output, and its cost is included in $\operatorname{CC}(\Pi)$.
Enforcing this convention increases communication by at most a factor of
two, which is absorbed in the theorem's bound.
We also use $\Pi$ for its transcript, including the public coins
$\mathcal R$. The full source shares and the local coins are not part of
the transcript. Our extractor error is $\varepsilon=\delta^4$;
for sufficiently large $b$, we have $b\geq300c\geq4\log(1/\delta)$.

If $\operatorname{CC}(\Pi)\geq\R(F)$, an optimal $1/3$-error protocol
for $F$ already has cost at most $\operatorname{CC}(\Pi)$.
Henceforth assume $\operatorname{CC}(\Pi)<\R(F)$.

\paragraph{Protocols correct on a distribution (same as~\cite{anshu}).}
For each distribution $\mu$ supported on $\dom(F)$, we seek a protocol
$\Pi_\mu$ for $F$ with
\begin{equation}\tag{D}\label{lb:distribution-target}
 \operatorname{CC}(\Pi_\mu)\leq \operatorname{CC}(\Pi)+1,
 \qquad
 \operatorname{err}_\mu(\Pi_\mu)\leq\frac12-\frac\delta3.
\end{equation}
The minimax principle and error reduction will then give the required
protocol $\Pi'$ on all promised inputs. Fix such a distribution $\mu$
for the rest of the distributional argument.

\paragraph{Construct a distribution for \texorpdfstring{$F_{\mathrm{tot}}$}{Ftot} (edited argument).}
Define the joint input random variable
\[
 T=(X_1,\ldots,X_c,U,Y_1,\ldots,Y_c,V),
\]
where the pairs $(X_i,Y_i)$ are independent samples from $\mu$, and
$U,V$ are independent uniform elements of $\F_2^{4b}$, independent
of all these pairs. As in \cite{anshu}, $T$ also denotes the resulting
input distribution. Write
\[
 \mathbf X=(X_1,\ldots,X_c),\quad
 \mathbf Y=(Y_1,\ldots,Y_c),\quad
 L_i=F(X_i,Y_i),\quad L=(L_1,\ldots,L_c).
\]
For each address $\ell$, set $U_\ell=E_\ell U$ and $V_\ell=E_\ell V$.
These are Alice's and Bob's parts of cell $\ell$. Each part is
uniform on $\F_2^b$. Unlike the independent cells in \cite{anshu},
all of Alice's cells are obtained from one source $U$, and all of
Bob's cells from one source $V$.
We use $\mathbf X_{-i}$ for the tuple with coordinate $i$ deleted
and $\mathbf X_{<i}=(X_1,\ldots,X_{i-1})$, with the same convention
for $\mathbf Y$ and $L$.

\paragraph{Rule out easy distributions \texorpdfstring{$\mu$}{mu} (same as~\cite{anshu}).}
An input distribution is easy if one party can predict the output
from its own input with sufficient advantage. Let $W$ be uniform on
$\{0,1\}^c$, and consider the conditions
\begin{equation}\tag{A2}\label{lb:a2}
 \Delta(\mathbf XL,\mathbf X\otimes W)\leq c\delta/3,
 \qquad
 \Delta(\mathbf YL,\mathbf Y\otimes W)\leq c\delta/3.
\end{equation}
If either condition fails, \cite[Claim~18]{anshu}, with
$\varepsilon=\delta/3$ and the parties exchanged when needed, gives
a one-bit protocol $\Pi_\mu$ satisfying \eqref{lb:distribution-target}.
We henceforth work under \eqref{lb:a2}.

\paragraph{Construct new protocols \texorpdfstring{$\Pi_i$}{Pi i} (same as~\cite{anshu}).}
For each $i\in[c]$, define a protocol $\Pi_i$ on inputs in $\dom(F)$.
Given $(X_i,Y_i)$, the parties use public coins to sample
$(\mathbf X_{-i},\mathbf Y_{-i})$ from $\mu^{\otimes(c-1)}$ and to
sample the independent uniform sources $U,V$. They run $\Pi$ on
$(\mathbf X,U;\mathbf Y,V)$, using the original local strategies.
The transcript, including the public coins, is
\[
 \Pi_i=(\Pi,\mathbf X_{-i},U,\mathbf Y_{-i},V),
 \qquad \operatorname{CC}(\Pi_i)\leq \operatorname{CC}(\Pi).
\]
Only the messages of $\Pi$ are transmitted; the additional variables
are sampled publicly. 

\paragraph{Rule out informative protocols \texorpdfstring{$\Pi_i$}{Pi i} (same as~\cite{anshu}).}
We test whether any of the new protocols gives a party enough
information about the output $L_i$. The conditions are
\begin{equation}\tag{A3}\label{lb:a3}
 I(L_i:\Pi_i\mid X_i)\leq\delta,
 \qquad I(L_i:\Pi_i\mid Y_i)\leq\delta
 \qquad(i\in[c]).
\end{equation}
If one fails, \cite[Claim~17]{anshu} gives the corresponding party
a prediction with advantage greater than $\delta/3$. Transmitting
that prediction adds one bit to $\Pi_i$, producing a protocol
$\Pi_\mu$ with \eqref{lb:distribution-target}. We can therefore assume
\eqref{lb:a3}. 

\paragraph{Obtain a contradiction (edited argument).}
If none of the preceding cases provides the required protocol, then
\eqref{lb:a2} and \eqref{lb:a3} both hold. We show that this is impossible
by following Claims~22--25 of \cite{anshu}.

Apply Lemma~\ref{lb:leakage}, our replacement for Claim~22, to
\eqref{lb:a2}. It gives
\begin{equation}\label{lb:cell-tail}
 \Pr_{(\mathbf x,\ell)\leftarrow\mathbf XL}\!\left[
  \Delta\bigl((\Pi U_\ell)^{\mathbf x},
              \Pi^{\mathbf x}\otimes\mathsf U_b\bigr)>\sqrt\delta
 \right]<0.01,
\end{equation}
and the corresponding bound for $(\mathbf Y,V_\ell)$.
For most $(\mathbf x,\ell)$, the transcript and Alice's part of the
correct cell are thus close to independent. The extractor supplies
this conclusion even though the cells themselves are correlated.

Next, the proof of \cite[Claim~23]{anshu} uses \eqref{lb:a3} to give
\begin{equation}\label{lb:address-tail}
 \Pr_{(\mathbf x,\ell)\leftarrow\mathbf XL}\!\left[
  \Delta\bigl((\Pi U_\ell)^{\mathbf x,\ell},
              (\Pi U_\ell)^{\mathbf x}\bigr)>100\sqrt{c\delta}
 \right]\leq0.01,
\end{equation}
and the analogous bound for Bob. Thus further conditioning on
$\ell$ changes the joint distribution of the transcript and the
correct cell only slightly for most $(\mathbf x,\ell)$.
Subsection~\ref{lb:reuse-claims} verifies that the proof still applies
when a cell is a deterministic function of a compressed source.

The proof of \cite[Claim~24]{anshu} combines
\eqref{lb:cell-tail} and \eqref{lb:address-tail}, for both parties,
to obtain
\begin{equation}\label{lb:fiber-tail}
 \Pr_{(\mathbf x,\mathbf y,\ell,u_\ell,v_\ell)
            \leftarrow\mathbf X\mathbf Y L U_LV_L}\!\left[
  \Delta\bigl(
    \Pi^{\mathbf x,\mathbf y,\ell,u_\ell,v_\ell},
    \Pi^{\mathbf x,\mathbf y,\ell}\bigr)
       >6\cdot10^6\sqrt{c\delta}
 \right]<0.09.
\end{equation}
Here the superscripts $u_\ell,v_\ell$ mean conditioning on
$U_\ell=u_\ell,V_\ell=v_\ell$. For each fixed base tuple, these two
cell parts are independent and uniform. The conclusion says that
fixing their values usually changes the transcript distribution
only slightly. The required conditional independence properties
are checked in Subsection~\ref{lb:reuse-claims}.

Finally, the proof of \cite[Claim~25]{anshu} contradicts
\eqref{lb:fiber-tail}. Its application here uses a protocol on
$(u_\ell,v_\ell)$ obtained by independently sampling $U,V$ subject to
$E_\ell U=u_\ell$ and $E_\ell V=v_\ell$, as described below. 

\paragraph{Minimax argument (same as~\cite{anshu}).}
For every $\mu$ on $\dom(F)$ we have obtained a protocol $\Pi_\mu$
satisfying \eqref{lb:distribution-target}. The finite minimax
principle gives a protocol $\widetilde\Pi$ for $F$ with
\[
 \operatorname{CC}(\widetilde\Pi)\leq \operatorname{CC}(\Pi)+1,
 \qquad
 \operatorname{err}(\widetilde\Pi)\leq\frac12-\frac\delta3.
\]
We use the communication-only minimax principle; no information-cost
bound is needed. Error reduction by independent repetition and
majority voting then gives a $1/3$-error protocol $\Pi'$ with
\[
 \operatorname{CC}(\Pi')
 \leq O\bigl(\delta^{-2}(\operatorname{CC}(\Pi)+1)\bigr)=O(c^2 \operatorname{CC}(\Pi)).
\]
Here $\operatorname{CC}(\Pi)+1\leq2\operatorname{CC}(\Pi)$, and the predictions have already been transmitted,
so their majority requires no extra communication. Together with the
case $\operatorname{CC}(\Pi)\geq\R(F)$ and $c=\Theta(\log^2 b)$,
this establishes the claimed bound on
$\operatorname{CC}(\Pi')$ in all cases.

The supporting lemma and the checks for reusing
Claims~23--25 given in Section \ref{lb:reuse-claims} complete the proof of Theorem~\ref{lb:conversion}.

\subsection{Proofs of claims}\label{lb:reuse-claims}

\paragraph{Replacement for Claim 22.}
The following lemma supplies the extractor-specific step.

\begin{lemma}[Extractor replacement for Claim 22]\label{lb:leakage}
In the setup of Subsection~\ref{lb:main-argument}, with
$\operatorname{CC}(\Pi)<\R(F)$, assumption~\eqref{lb:a2} implies
\eqref{lb:cell-tail} and its Bob counterpart.
\end{lemma}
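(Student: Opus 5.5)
The plan is to prove the Alice statement; Bob's follows by the symmetric argument with the roles of $(\mathbf X,U)$ and $(\mathbf Y,V)$ exchanged. The argument has three steps: use the extractor to get closeness for a \emph{uniform} seed, transfer this to the actual address $L$ via \eqref{lb:a2}, and finish with Markov's inequality.

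\textbf{Step 1: min-entropy of $U$ given the transcript.} Fix $\mathbf x$. Conditioned on $\mathbf X=\mathbf x$, the source $U$ is still uniform on $\F_2^{4b}$. I would then condition further on a transcript value $\pi$, including the public coins; the local coins are absorbed into the parties' strategies.

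Since $\Pi$ is a protocol, conditioned on $\mathbf X=\mathbf x$ and on the coins, the set of $(U,\mathbf Y,V)$ consistent with $\pi$ is a product over the two parties. So for fixed $\mathbf x$, the pair $(U,(\mathbf Y,V))$ is conditionally independent given $\Pi$.

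The key counting fact is the following. Because $\Pi$ has at most $2^{\operatorname{CC}(\Pi)}$ message strings, with $\operatorname{CC}(\Pi)<\R(F)\le n+1\le b$, the standard bound gives
\[
 \Pr_\pi\bigl[H_\infty(U\mid \mathbf X=\mathbf x,\Pi=\pi)<4b-\operatorname{CC}(\Pi)-\log(1/\delta)\bigr]\le\delta .
\]
On the good event this min-entropy is at least $3b-\log(1/\delta)\ge 2b$, using $b\ge4\log(1/\delta)$.

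One subtlety needs care: the public coins also include the part of the transcript that does not depend on $U$, and fixing them does not reduce $U$'s entropy. Min-entropy therefore drops only through Alice's messages. In a mixture argument I would replace min-entropy by smooth min-entropy, or directly use a bound of the form $\Pr[U=u\mid\pi]\le 2^{-4b}/\Pr[\pi\mid\mathbf x]$ and discard low-probability transcripts.

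\textbf{Step 2: extractor with a uniform seed.} For each good $\pi$, the definition \eqref{lb:def-extractor} with $J$ uniform and independent of everything gives
\[
 \Delta\bigl((J,E_JU)^{\mathbf x,\pi},\mathsf U_c\otimes\mathsf U_b\bigr)\le\varepsilon .
\]
Averaging over $\pi$ gives
\[
 \Delta\bigl((J\,\Pi\,U_J)^{\mathbf x},\,\mathsf U_c\otimes\Pi^{\mathbf x}\otimes\mathsf U_b\bigr)\le\varepsilon+\delta .
\]

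\textbf{Step 3: replacing $J$ by the true address $L$.} This is the main obstacle. The address $L$ is not independent of $\Pi$: it is correlated with $\mathbf Y$, and $\mathbf Y$ influences Bob's messages. I would handle this as follows.

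Conditional on $\mathbf x$, the law of $(\Pi,U)$ does not depend on $\ell$ except through $\Pi$. The quantity in \eqref{lb:cell-tail} involves $(\Pi U_\ell)^{\mathbf x}$ with $\ell$ a \emph{fixed} seed, not conditioned on. That is, it is the joint law of $(\Pi,E_\ell U)$ given only $\mathbf X=\mathbf x$, and $\ell$ enters only as the index of the linear map. So the random variable in the probability is a function $g(\mathbf x,\ell)=\Delta\bigl((\Pi E_\ell U)^{\mathbf x},\Pi^{\mathbf x}\otimes\mathsf U_b\bigr)\in[0,1]$.

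Step 2 gives $\mathbb E_{\mathbf x}\mathbb E_{\ell\sim W}[g(\mathbf x,\ell)]\le\varepsilon+\delta$. By \eqref{lb:a2}, the pair $(\mathbf x,\ell)\leftarrow\mathbf XL$ is within $c\delta/3$ of $\mathbf X\otimes W$. Since $g\in[0,1]$, this yields
\[
 \mathbb E_{\mathbf XL}[g]\le\varepsilon+\delta+c\delta/3 .
\]
Markov's inequality then bounds the probability that $g>\sqrt\delta$ by $(\varepsilon+\delta+c\delta/3)/\sqrt\delta=O(c\sqrt\delta)$. With $\delta=1/(10^{22}c)$ this is $O(\sqrt{c}\cdot10^{-11})<0.01$, using $c\delta=10^{-22}$, so $c\sqrt\delta=\sqrt{c}\cdot 10^{-11}$.

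This bound grows with $c$. So I would instead route the \eqref{lb:a2} slack through the smaller quantity, via the Markov step on the $W$-average before transferring. Concretely, the bad set $B=\{g>\sqrt\delta\}$ has $(\mathbf X\otimes W)$-measure at most $(\varepsilon+\delta)/\sqrt\delta\le2\sqrt\delta$, and therefore $\mathbf XL$-measure at most $2\sqrt\delta+c\delta/3<0.01$, as required. Verifying Step 1's entropy accounting cleanly, with $\operatorname{CC}(\Pi)<\R(F)\le b$, is the part I expect to need the most care.
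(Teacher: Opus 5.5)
Your proposal is correct and follows the paper's proof. You discard low-probability transcripts (for fixed $\mathbf x$ and public coins) to get min-entropy at least $2b$, apply the extractor with a uniform seed, and then, in your corrected ordering, apply Markov under $\mathbf X\otimes W$ before transferring to $\mathbf XL$ via \eqref{lb:a2}; the only difference is your cutoff $\delta 2^{-\operatorname{CC}(\Pi)}$ in place of the paper's $2^{-2b}$, giving error $\varepsilon+\delta$ instead of $\varepsilon+2^{\operatorname{CC}(\Pi)-2b}$, which still yields $2\sqrt\delta+c\delta/3<0.01$.
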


\begin{proof}
We first prove the estimate
\begin{equation}\label{lb:leakage-bound}
 \mathbb E_{\mathbf x\leftarrow\mathbf X,\,j\leftarrow{W}}
 \Delta\bigl((\Pi,E_jU)^{\mathbf x},
             \Pi^{\mathbf x}\otimes\mathsf U_b\bigr)
 \leq e,\qquad e=\varepsilon+2^{\operatorname{CC}(\Pi)-2b}\leq2\delta^4.
\end{equation}
Fix $\mathbf X=\mathbf x$ and the public coins
$\mathcal R=r$. Let $Z$ be the communicated transcript. There are at
most $2^{\operatorname{CC}(\Pi)}$ possible transcripts. Call $z$ bad if
$\Pr[Z=z\mid\mathbf x,r]<2^{-2b}$; the total bad probability is at most
$2^{\operatorname{CC}(\Pi)-2b}$. For every other transcript and every $u$, Bayes' rule gives
\[
 \Pr[U=u\mid\mathbf x,r,z]
 =\frac{2^{-4b}\Pr[Z=z\mid\mathbf x,r,U=u]}
        {\Pr[Z=z\mid\mathbf x,r]}
 \leq2^{-2b}.
\]
Thus, for every transcript not declared bad, the conditional distribution
of $U$ given $\mathbf X=\mathbf x,\mathcal R=r,Z=z$ has min-entropy
at least $2b$. Apply \eqref{lb:def-extractor} to each such conditional
distribution, bound the contribution of bad transcripts by their total
probability, and average over $z,r,\mathbf x$. This proves the first inequality in
\eqref{lb:leakage-bound}. The second follows from
$\operatorname{CC}(\Pi)<\R(F)\leq n+1\leq b$,
$\varepsilon=\delta^4$, and $b\geq4\log(1/\delta)$.
In \eqref{lb:leakage-bound}, the seed $j$ is sampled uniformly and
independently of all protocol inputs and coins. The bound concerns the
transcript of $\Pi$; it does not apply to $\Pi_i$, whose public coins
also reveal $U,V$.

Define
\[
 g_A(\mathbf x,j)
 =\Delta\bigl((\Pi,E_jU)^{\mathbf x},
              \Pi^{\mathbf x}\otimes\mathsf U_b\bigr).
\]
Markov's inequality under $\mathbf X\otimes{W}$ gives
$\Pr[g_A>\sqrt\delta]\leq e/\sqrt\delta$.
By \eqref{lb:a2}, replacing $\mathbf X\otimes W$ by $\mathbf XL$
changes the probability of this event by at most $c\delta/3$. Hence
\[
 \Pr_{(\mathbf x,\ell)\leftarrow\mathbf XL}
       [g_A(\mathbf x,\ell)>\sqrt\delta]
 \leq\frac e{\sqrt\delta}+\frac{c\delta}{3}
 \leq2\delta^{7/2}+\frac{c\delta}{3}<0.01.
\]
This is \eqref{lb:cell-tail}. Exchanging the parties proves the Bob
bound. 
\end{proof}

\paragraph{Claim 23: conditioning on the address.}
The independence assumptions used in the chain-rule argument of
\cite[Claim~23]{anshu} are that the input pairs $(X_i,Y_i)$, $i\in[c]$,
are mutually independent, and that $U,V$ are independent of one another
and of these pairs. The same argument gives
\[
 I(L:\Pi U\mid\mathbf X)\leq c\delta,
 \qquad I(L:\Pi V\mid\mathbf Y)\leq c\delta.
\]
For each fixed $\ell$, the final projection to a table cell in that proof
is replaced by $U\mapsto E_\ell U$ or $V\mapsto E_\ell V$.
These are deterministic maps, so the same Pinsker and Markov steps
give \eqref{lb:address-tail} and its Bob counterpart.

\paragraph{Claim 24: conditioning on both base inputs and shares.}
The proof of \cite[Claim~24]{anshu} uses the two estimates
\eqref{lb:cell-tail} and \eqref{lb:address-tail}, for both parties,
and conditional independence of the inputs given the transcript
\cite[Fact~15]{anshu}.
We verify that these properties remain valid here.
For fixed $\mathbf X=\mathbf x,L=\ell$, the source $U$ is independent
of $(\mathbf Y,V)$ before the protocol. By \cite[Fact~15]{anshu},
this independence is preserved after conditioning on $\Pi$, giving
\[
 U\longleftrightarrow\Pi\longleftrightarrow(\mathbf Y,V)
 \quad\text{given }\mathbf x,\ell.
\]
The symmetric relation holds after fixing $\mathbf Y=\mathbf y,L=\ell$.
After fixing both base inputs, the two sources are independent, so also
\[
 U\longleftrightarrow\Pi\longleftrightarrow V
 \quad\text{given }\mathbf x,\mathbf y,\ell.
\]
Projecting to $U_\ell=E_\ell U$ and $V_\ell=E_\ell V$ preserves the
Markov relations used in that proof. Moreover, surjectivity makes
$U_\ell,V_\ell$ independent uniform variables before conditioning on
$\Pi$, for each fixed base tuple. Thus the conditioning and averaging
argument of Claim~24 applies with these substitutions, including its
constants, and yields \eqref{lb:fiber-tail}.

\paragraph{Claim 25: the final contradiction.}
Fix a base tuple $(\mathbf x,\mathbf y)$ and its address $\ell$ for
which the conditional probability in \eqref{lb:fiber-tail} is less than
$0.09$. Write
\[
 Q=\Pi^{\mathbf x,\mathbf y,\ell},\qquad
 Q^{u_\ell,v_\ell}
 =\Pi^{\mathbf x,\mathbf y,\ell,U_\ell=u_\ell,V_\ell=v_\ell}.
\]
For each input pair $u_\ell,v_\ell\in\F_2^b$,
$Q^{u_\ell,v_\ell}$ is the transcript distribution of the following
protocol: Alice samples $U$ uniformly from
$E_\ell^{-1}(u_\ell)$, Bob independently samples $V$ uniformly from
$E_\ell^{-1}(v_\ell)$, and they run $\Pi$ on the fixed
base tuple and these sources. By linearity and surjectivity, each of
these two sets has $2^{3b}$ elements, so this sampling produces exactly
$Q^{u_\ell,v_\ell}$. Under independent uniform inputs, the
transcript distribution is $Q$. The protocol computes the XOR function
\[
 g(u_\ell+v_\ell),\qquad
 g(s)=\Verify(\mathbf x,\mathbf y;s),
\]
with worst-case error at most $1/3$, since
$E_\ell(U+V)=u_\ell+v_\ell$. Since the fixed input tuple is
promised, at least one certificate is valid and at least one is invalid.
Thus $g$ takes both values $0$ and $1$, as needed to obtain inputs
with opposite outputs in the final contradiction.

The argument of \cite[Claim~25]{anshu} now applies to this protocol.
It uses independent uniform inputs $u_\ell,v_\ell$, the XOR form,
the fact that $g$ takes both output values, and the Pythagorean property
of transcript distributions \cite[Fact~16]{anshu}. Indeed, writing
$\tau=6\cdot10^6\sqrt{c\delta}=6\cdot10^{-5}$, our estimate is
\[
 \Pr_{u_\ell,v_\ell\leftarrow\mathsf U_b}
       [\Delta(Q^{u_\ell,v_\ell},Q)>\tau]<0.09.
\]
Claim~25's argument produces two inputs
with opposite outputs whose transcript distributions have total
variation at most $\sqrt{8\tau}<1/3$. This contradicts the assumption that the protocol has error at most
$1/3$ on both inputs and completes the proof of the final step.

\section{Applications}
In this section we apply Theorem \ref{thm:main} to prove Theorems \ref{th:shape} and \ref{th:forrelation}. Theorem \ref{th:shape} is obtained by using the $\Shape_m$ function from \cite{Gavinsky2}, as in \cite{gavinsky}.
Theorem \ref{th:forrelation} is obtained by using the $k$-forrelation function combined with an inner product gadget, as in \cite{ssw}.

\subsection{Application 1: total function from \texorpdfstring{$\boldsymbol{\Shape}$}{Shape}}\label{sec:shape}

Write Alice's base input as $x=(x_1,x_2)$ and Bob's as $y=(y_1,y_2)$, with all four strings of length $m\geq2$. Index positions by $\{0,1,\ldots,m-1\}$ and use the convention $(\sigma_i a)_k=a_{k-i\bmod m}$. Define
\[
 a=x_1\oplus y_1,\qquad b=x_2\oplus y_2,\qquad
 d_i=|\sigma_i(a)\oplus b|.
\]
The function $\Shape_m$ from~\cite{Gavinsky2} is defined on inputs for which
some cyclic shift of $a$ is at Hamming distance at most $2m/5$ from $b$,
or every cyclic shift is at distance between $7m/15$ and $8m/15$ from $b$.
It outputs $1$ in the first case and $0$ in the second, and is undefined
otherwise. Thus its domain predicate is
\begin{equation}\label{eq:shape-domain}
 D_m(x,y)=
 \left(\bigvee_{i=0}^{m-1}[5d_i\leq2m]\right)
 \vee
 \left(\bigwedge_{i=0}^{m-1}[7m\leq15d_i\leq8m]\right).
\end{equation}
The integer comparisons in~\eqref{eq:shape-domain} express the original thresholds $2m/5$, $7m/15$, and $8m/15$ exactly, including when $m$ is not divisible by $15$.

\begin{proposition}\label{prop:shape-circuit}
The domain predicate $D_m$ has a Boolean circuit of size
$O(m\log^2 m)$.
\end{proposition}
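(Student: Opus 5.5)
The plan is to compute all $m$ shifted distances $d_0,\dots,d_{m-1}$ at once with a single cyclic correlation, done by a fast integer multiplication circuit. After that, the threshold tests and the final OR/AND of~\eqref{eq:shape-domain} are cheap. The main cost lies in the multiplication, and I am not sure it meets the stated $O(m\log^2 m)$ bound (see the last paragraph).

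\emph{Step 1 (reduce to correlations).} Compute $a=x_1\oplus y_1$ and $b=x_2\oplus y_2$ with $O(m)$ gates. Since $a,b$ are $0/1$ vectors,
\[
 d_i=|a|+|b|-2c_i,\qquad c_i=\sum_{k}a_{k-i \bmod m}\,b_k .
\]
The weights $|a|,|b|$ are computed by adder trees of size $O(m)$. The vector $(c_i)_i$ is the cyclic correlation of $a$ and $b$. Each $c_i\le m$, so all arithmetic can be done on integers of $\kappa=\lceil\log(m+1)\rceil+O(1)$ bits.

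\emph{Step 2 (one integer product).} Use Kronecker substitution. Let
\[
 A=\sum_k a_k2^{\kappa k},\qquad B'=\sum_k b_{m-1-k}2^{\kappa k},
\]
which costs no gates, only wiring. The product $AB'$ has $O(m\kappa)=O(m\log m)$ bits. Its $\kappa$-bit blocks are the linear correlation coefficients, and no carries cross blocks because every coefficient is at most $m<2^{\kappa}$. Each cyclic $c_i$ is the sum of two of these blocks, one at index $i'$ and one at index $i'+m$ for the appropriate $i'$. This takes $m$ further $\kappa$-bit additions, i.e.\ $O(m\log m)$ gates.

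If multiplication of $n$-bit integers has Boolean circuits of size $O(n\log n)$, as in the Harvey--van der Hoeven algorithm, then with $n=O(m\log m)$ this step costs $O(m\log^2 m)$.

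\emph{Step 3 (thresholds).} For each $i$, form $5d_i$ and $15d_i$ by shifts and additions. Compare them against the constants $2m$, $7m$, $8m$, which are hardwired since $m$ is fixed. This costs $O(\log m)$ gates per $i$, so $O(m\log m)$ in total. Finally take an OR-tree over the first family of comparisons, an AND-tree over the second, and one more OR, for $O(m)$ gates. The overall size is dominated by Step~2.

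\emph{Main obstacle.} The bottleneck, and the point where I am least confident, is Step~2: obtaining a genuinely $O(n\log n)$-size Boolean \emph{circuit} for integer (or cyclic polynomial) multiplication. The Harvey--van der Hoeven bound is stated for multitape Turing machines, so I would first check that its recursion translates directly into a circuit family of the same size, using the fact that non-uniformity is free for us. If that translation fails, the fallbacks are Schönhage--Strassen on $O(m\log m)$-bit integers, or Nussbaumer's FFT over $\mathbb Z_{2^\kappa}[z]/(z^M+1)$ with $M\approx\sqrt m$, where twiddle factors are signed cyclic shifts and hence free. Both give $O(m\log^2 m\log\log m)$, one $\log\log m$ factor more than stated. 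That would still suffice for Theorem~\ref{th:shape}, since $s=\widetilde O(m)$ is all Theorem~\ref{thm:main} needs, but it would not prove the proposition as stated. To remove that factor, I would try to exploit that the inputs are single bits and the outputs only $\kappa$ bits wide.
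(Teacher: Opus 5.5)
Your proposal follows the same route as the paper. You pack $a$ and $b$, one of them reversed, into $O(m\log m)$-bit integers with blocks wider than $\log m$, so that a single product is carry-free. Each cyclic overlap is then the sum of two blocks, and the weights, distances and thresholds take $O(m\log m)$ further gates. For the step you flag, the paper simply invokes Harvey--van der Hoeven for a Boolean circuit of size $O(L\log L)$ with $L=O(m\log m)$. As far as I recall, Harvey and van der Hoeven themselves remark that their bound holds in the Boolean circuit model with essentially the same proof, so the fallbacks that lose a $\log\log m$ factor should not be needed.
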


\begin{proof}
We show that all $m$ distances $d_i$ can be computed exactly in $O(m\log^2 m)$ bit operations and by an explicit Boolean circuit of size $O(m\log^2 m)$. Consequently, the domain predicate $D_m$ has a circuit of this size.

Define the cyclic overlap counts
\[
 \rho_i=\sum_{j=0}^{m-1}a_jb_{(j+i)\bmod m}.
\]
The elementary identity $u\oplus v=u+v-2uv$ for bits gives
\begin{equation}\label{eq:distance-correlation}
 d_i=|a|+|b|-2\rho_i.
\end{equation}
Thus the overlap counts $\rho_i$ determine every $d_i$. We compute
all the counts $\rho_i$ through an exact integer product, as follows.

Let $w=\lceil\log_2(m+1)\rceil$ and $B=2^w>m$, and pack the two strings into integers
\[
 A=\sum_{j=0}^{m-1}a_jB^{m-1-j},\qquad
 C=\sum_{k=0}^{m-1}b_kB^k.
\]
Their product has the expansion
\begin{equation}\label{eq:packed-product}
 AC=\sum_{r=0}^{2m-2}\gamma_rB^r,\qquad
 \gamma_r=\sum_{\substack{0\leq j,k<m\\m-1-j+k=r}}a_jb_k.
\end{equation}
Each coefficient satisfies $0\leq\gamma_r\leq m<B$. There are therefore no carries between the coefficient blocks: $\gamma_r$ is precisely the corresponding $w$-bit block of the binary product. Separating the pairs with $k-j=i$ and $k-j=i-m$ gives
\begin{equation}\label{eq:wrap-correlation}
 \rho_i=\gamma_{m-1+i}+\gamma_{i-1}\qquad(0\leq i<m),
\end{equation}
where $\gamma_{-1}=0$. 

The two packed integers have at most $L=mw=O(m\log m)$ bits. By using Harvey and van der Hoeven's integer-multiplication algorithm~\cite{hvdh}, we can compute their product with a Boolean circuit of size
\[
 O(L\log L)=O(m\log^2 m).
\]
Packing and extracting fixed blocks are wiring operations in the circuit. Computing the two weights, the $m$ sums in~\eqref{eq:wrap-correlation}, the distances in~\eqref{eq:distance-correlation}, and the threshold comparisons in~\eqref{eq:shape-domain} costs an additional $O(m\log m)$ Boolean gates. The same postprocessing can be carried out with sequential scans of the coefficient blocks in $O(m\log m)$ bit operations.
\end{proof}


\begin{proof}[Proof of Theorem~\ref{th:shape}]
Ref.~\cite{Gavinsky2} gives the bounds
\[
 \R(\Shape_m)=\Omega(\sqrt m),\qquad
 \Q^{A\to B}(\Shape_m)=O(\log^2m).
\]
Apply Theorem~\ref{thm:main} with $n=2m$ and the exact domain-circuit
bound $s=O(m\log^2 m)$ from Proposition~\ref{prop:shape-circuit}.
Then $N=\widetilde O(m)$, $N\geq m$, the randomized lower bound is
$\widetilde\Omega(\sqrt m)=\widetilde\Omega(\sqrt N)$, and the
quantum upper bound is polylogarithmic in $N$ with two messages.
\end{proof}

\subsection{Application 2: total functions from \texorpdfstring{$\boldsymbol{k}$}{k}-Forrelation}
\label{sec:forrelation}

Fix a constant integer $k\geq2$, let $m=2^\ell$ with $\ell\geq1$, and put
$\delta_k=2^{-5k}$. Write $W_m$ for the unnormalized Walsh--Hadamard
matrix, with entries $(W_m)_{a,b}=(-1)^{a\cdot b}$ for
$a,b\in\bits^\ell$, and put $H_m=W_m/\sqrt m$.
For $z=(z_1,\ldots,z_k)\in\{\pm1\}^{km}$, define
\[
 \Phi_{k,m}(z)=\frac{1}{m^{(k+1)/2}}
 \mathbf1^\top D_{z_1}W_mD_{z_2}W_m\cdots W_mD_{z_k}\mathbf1,
 \qquad D_{z_i}=\operatorname{diag}(z_i).
\]
The partial function $\mathsf{Forr}_{k,m}$ is $1$ when
$\Phi_{k,m}\geq\delta_k$, and $0$ when
$|\Phi_{k,m}|\leq\delta_k/2$; it is undefined otherwise.
These are the thresholds in Bansal--Sinha's lower bound~\cite{bansal-sinha}.

We compose this function with inner product on
$b=\Theta(\log(km))$ bits, with a sufficiently large constant as
in~\cite[Section~1.4, equation~(1.5)]{ssw}.
Alice and Bob each receive $km$ strings $x_{i,j},y_{i,j}\in\bits^b$.
Set
\[
 z_i(j)=(-1)^{\langle x_{i,j},y_{i,j}\rangle_{\F_2}},
 \qquad f_{k,m}(x,y)=\mathsf{Forr}_{k,m}(z).
\]
The input length per party is $n=kmb=\widetilde O_k(m)$.

The query lower bound of Bansal--Sinha~\cite[Corollary~1.4]{bansal-sinha}
and IP lifting~\cite[Theorem~1]{cfkmp}
(see also~\cite[equation~(1.5)]{ssw}) give
\begin{equation}\label{eq:forrelation-lifted-lower}
 \R(f_{k,m})=\widetilde\Omega_k(m^{1-1/k}).
\end{equation}
The quantum algorithm in~\cite[Fact~5.1]{ssw}, with $U=H_m$, and the
standard query-to-communication simulation~\cite{bcw}
(see also~\cite[equation~(1.4)]{ssw}) give
\begin{equation}\label{eq:forrelation-lifted-upper}
 \Q^{\,2\lceil k/2\rceil}(f_{k,m})=O_k(\log m).
\end{equation}
For the message count, run $O_k(1)$ independent copies in parallel and
threshold their outcomes: the acceptance probabilities in the two promise
cases differ by at least $\delta_k/4$.
Each of the $\lceil k/2\rceil$ query layers requires two messages of
$O(b+\log(km))$ qubits per copy, with no prior entanglement.

\begin{proposition}\label{prop:forrelation-domain}
The domain indicator of $f_{k,m}$ has a Boolean circuit of size
\[
 s=O(kmb+k^2m\log^2 m)=\widetilde O_k(n).
\]
\end{proposition}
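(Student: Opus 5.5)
We build the circuit in three stages: compute the $\pm1$ signs $z_i(j)$, compute $\Phi_{k,m}(z)$ exactly as an integer using fast Walsh--Hadamard transforms, and then compare against the two thresholds.

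\textbf{Stage 1: the signs.} Each inner product $\langle x_{i,j},y_{i,j}\rangle_{\F_2}$ needs $O(b)$ gates. Over all $km$ pairs this gives $O(kmb)$ gates and yields the bits $\beta_{i,j}$ with $z_i(j)=(-1)^{\beta_{i,j}}$.

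\textbf{Stage 2: the unnormalized form.} Let
\[
 P=\mathbf1^\top D_{z_1}W_mD_{z_2}W_m\cdots W_mD_{z_k}\mathbf1\in\mathbb Z .
\]
We evaluate it right to left as a sequence of integer vectors: start with $v_k=z_k$, and for $i=k-1,\dots,1$ set $v_i=D_{z_i}W_m v_{i+1}$; finally $P=\sum_j v_1(j)$. Every entry of $v_i$ is bounded in absolute value by $m^{k-i}$, so all numbers fit in $O(k\log m)$ bits in two's complement. Multiplying by a diagonal sign matrix is a conditional negation, costing $O(k\log m)$ gates per entry. Each $W_m$ is computed by the fast Walsh--Hadamard butterfly: $\ell=\log m$ layers of $m/2$ additions and subtractions of $O(k\log m)$-bit integers. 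With ripple-carry adders this costs $O(m\log m\cdot k\log m)=O(km\log^2 m)$ gates per transform. There are $k-1$ transforms, for a total of $O(k^2m\log^2 m)$. The final sum $\sum_j v_1(j)$ is a balanced adder tree, costing $O(mk\log m)$ gates. Altogether this matches the bound $s=O(kmb+k^2m\log^2 m)$.

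\textbf{Stage 3: the thresholds.} Since $\Phi_{k,m}=P/m^{(k+1)/2}$ with $m=2^\ell$, the domain condition reads
\[
 \Phi\geq\delta_k \quad\text{or}\quad |\Phi|\leq\delta_k/2 .
\]
Multiplying through, this becomes
\[
 P\geq 2^{-5k}m^{(k+1)/2}\quad\text{or}\quad |P|\leq 2^{-5k-1}m^{(k+1)/2}.
\]
Both thresholds are of the form $2^{e}$, possibly times $\sqrt2$ when $\ell(k+1)$ is odd. In the power-of-two case, each comparison is a constant-size circuit on the $O(k\log m)$-bit integer $P$. In the $\sqrt2$ case, we avoid the irrational threshold by squaring, using the equivalences
\[
 P\geq \theta \iff \bigl(P\geq0 \text{ and } P^2\geq \theta^2\bigr),
 \qquad
 |P|\leq\theta \iff P^2\leq\theta^2,
\]
where $\theta^2$ is a power of two (a rational power of two if the exponent is negative, which is handled by scaling both sides by a fixed power of two). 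Squaring an $O(k\log m)$-bit number with schoolbook multiplication costs $O(k^2\log^2 m)$ gates, which is negligible. Adding Stage~1, Stage~2 and this comparison gives $O(kmb+k^2m\log^2m)$. Since $k$ is constant and $b=\Theta(\log(km))$, we have $n=kmb$, and this total is $\widetilde O_k(n)$.

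\textbf{Main obstacle.} The delicate step is the exact-arithmetic bookkeeping in Stage 2. We must certify that the bit-width bound $|v_i(j)|\leq m^{k-i}$ holds at every butterfly layer, so that fixed-width integers never overflow. We must also handle the irrational normalization $m^{(k+1)/2}$ exactly, via the squaring trick, rather than by approximating it.
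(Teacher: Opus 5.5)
Your proposal is correct and follows essentially the same route as the paper: $O(kmb)$ gates for the inner products, right-to-left evaluation of the integer $\mathbf 1^\top D_{z_1}W_m\cdots W_mD_{z_k}\mathbf 1$ via $k-1$ fast Walsh--Hadamard transforms on $O(k\log m)$-bit integers with ripple-carry arithmetic, and exact threshold tests by squaring. The paper applies the squared tests $2^{10k}T^2\ge m^{k+1}$ and $2^{10k+2}T^2\le m^{k+1}$ uniformly rather than splitting on the parity of $\ell(k+1)$, and your only slip is calling the power-of-two comparison ``constant-size'' when it needs $O(k\log m)$ gates (an OR over high-order bits plus the sign bit), which is still negligible.
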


\begin{proof}
First compute the $km$ inner products using $O(kmb)$ Boolean gates.
For the resulting signs $z_i(j)$, compute the integer
\[
 T=\mathbf1^\top D_{z_1}W_mD_{z_2}W_m\cdots W_mD_{z_k}\mathbf1
\]
from right to left. Starting with the vector $z_k$, perform $k-1$
Walsh--Hadamard transforms, interleaved with coordinatewise sign changes,
and finally sum all coordinates. The fast Walsh--Hadamard transform
consists of $\ell$ layers of $m/2$ butterflies
$(a,b)\mapsto(a+b,a-b)$. Thus the entire computation uses
$O(km\log m)$ integer additions, subtractions, and conditional
sign changes.

Every intermediate integer has absolute value at most $m^k$:
each transform can increase the largest coordinate magnitude by at most
$m$, and the final sum contributes at most one more factor $m$.
Consequently $O(k\log m)$ signed bits suffice throughout.
Ripple-carry addition and subtraction, as well as conditional negation,
have Boolean circuits of size linear in this bit length. Unrolling the
fixed butterfly network therefore computes $T$ with
$O(k^2m\log^2 m)$ gates.

There is no need to approximate $m^{(k+1)/2}$, even when it is irrational.
Since $T$ is an integer and $\delta_k=2^{-5k}$, the two promise branches
are equivalent to the exact integer tests
\begin{align*}
 \Phi_{k,m}\geq\delta_k
 &\quad\Longleftrightarrow\quad
 T\geq0\ \text{ and }\ 2^{10k}T^2\geq m^{k+1},\\
 |\Phi_{k,m}|\leq\delta_k/2
 &\quad\Longleftrightarrow\quad
 2^{10k+2}T^2\leq m^{k+1}.
\end{align*}
Their disjunction recognizes $\dom(f_{k,m})$.
Schoolbook squaring uses $O(k^2\log^2 m)$ gates; shifts by the displayed
powers of two and comparisons with the hardwired integer $m^{k+1}$
use no larger order of resources. Including the gadget computation gives
the claimed bound. 
\end{proof}

\begin{proof}[Proof of Theorem~\ref{th:forrelation}]
Apply Theorem~\ref{thm:main} to $f_{k,m}$. Its input length is
$n=\widetilde O_k(m)$, and Proposition~\ref{prop:forrelation-domain}
gives $s=\widetilde O_k(m)$. Together with
\eqref{eq:forrelation-lifted-lower}--\eqref{eq:forrelation-lifted-upper},
the conversion gives a total function with $N=\widetilde O_k(m)$
input bits per party, randomized complexity
$\widetilde\Omega_k(m^{1-1/k})$, and polylogarithmic quantum complexity
using $2\lceil k/2\rceil+1$ messages. Since $N\geq n\geq m$,
replacing $m$ by $N$ changes the bounds only by polylogarithmic factors.
\end{proof}

\section*{Acknowledgments}
The author is grateful to Dmytro Gavinsky for sharing his manuscript and for helpful correspondence.
The author is supported by JSPS KAKENHI grants JP24H00071 and JP25K24674, MEXT Q-LEAP grant JPMXS0120319794, JST ASPIRE grant JPMJAP2302, and JST CREST grant JPMJCR24I4. 

\section*{AI disclosure}
ChatGPT-6 Astra was used in the development of this paper,
including in exploring and refining proof ideas and in drafting and revising
the text. In particular, ChatGPT-6 Astra proposed the core proof idea underlying
Theorem \ref{thm:main} in response to the author's prompts. The author subsequently
verified, simplified, and organized the argument and edited its exposition.
The author takes full responsibility for the correctness and content of the paper.

\end{document}